\newtheorem{theorem}{Theorem}
\newtheorem{lemma}[theorem]{Lemma} 
 \newcommand\setreflabel[1]{\protected@edef\@currentlabel{#1}}\newcounter{claim}
 \newcommand\claimlabelfmt[1]{(#1)}
 \newenvironment{claim}[1][]%
   {\removelastskip\ifx\newenvironment#1\newenvironment%
    \refstepcounter{claim}\def\@claim{\arabic{claim}}\else\def\@claim{#1}\fi%
    \setreflabel{\expandafter\claimlabelfmt{\@claim}}%
    \global\edef\@lastclaim{clm@\the\inputlineno}%
    \label{\@lastclaim}\par\vspace{1ex}%
    \begin{compactitem}[~{\@currentlabel}]\item\it}
   {\end{compactitem}\par\vspace{1ex}}
 \newenvironment{proofclaim}[1][]%
   {\par\edef\@thisclaim{\ifx\newenvironment#1\newenvironment%
    \@lastclaim\else#1\fi}\noindent\ignorespaces}
   {This proves~\expandafter\ref{\@thisclaim}.\par\vskip 1ex\relax}
 \newenvironment{innerclaim}%
   {\let\@oldclaim\@thisclaim}
   {\let\@thisclaim\@oldclaim}
\newcommand{\qed}{$\Box$}
\newenvironment{proof}%
  {\noindent{\bf Proof.}\ }%
  {\hfill\qed\par\bigskip}
  {\noindent{\bf Proof of #1.}\ }%
  {\hfill\qed\par\bigskip}
\newcommand{\tp}{\mbox{\hspace*{0.1em}--\hspace*{0.1em}}}
\begin{document}

\title{4-coloring $P_6$-free graphs with no induced 5-cycles\thanks{Authors partially supported by NSF grants IIS-1117631 and DMS-1265803.}
\thanks{A short version of this paper has been submitted to the ACM-SIAM Symposium on
Discrete Algorithms.}
}
\author{Maria Chudnovsky, Peter Maceli, Juraj Stacho, Mingxian Zhong}
\date{\small IEOR Department, Columbia University, 500 West 120th Street, New York,
NY 10027, USA\\\{{\tt mchudnov,plm2109,js4374,mz2325}\}{\tt
@columbia.edu}\vspace{-5ex}}
\maketitle

\begin{abstract}
We show that the 4-coloring problem can be solved in polynomial time for graphs
with no induced 5-cycle $C_5$ and no induced 6-vertex path $P_6$.
\end{abstract}

\section{Introduction}

Graphs considered in this paper are finite and simple (undirected, no loops,
no parallel edges).  We say that a graph $G$ {\bf contains} a graph $H$ if $G$
has an induced subgraph isomorphic to $H$.

A {\bf coloring} of a graph $G$ assigns labels to the vertices of $G$ in such a
way that no two adjacent vertices receive the same label. To model this more
conveniently, we may assume that the labels come from a fixed set
$\{1,2,\ldots,k\}$ of numbers and we call this a $k$-coloring. 

The {\sc Coloring} problem then asks, given a graph $G$, to find a coloring of
$G$ using minimum number of colors. If the number of colors $k$ is specified, we
call this the {\sc $k$-Coloring} problem which asks to determine if a
$k$-coloring of $G$ exists or not.

{\sc Coloring} is NP-complete in cubic planar graphs
\cite{planar-col}. It admits polynomial time algorithms in many interesting
cases such as in interval graphs, chordal graphs, comparability graphs, and more
generally in perfect graphs \cite{perfect-col}.  Given the diverse nature of
these two extremes, the natural question arises: {\em what distinguishes hard
cases from easy cases?}

One approach to answer this question focuses on studying graphs $G$ that contain
no graph from a fixed collection $\{H_1,H_2,\ldots,H_t\}$ and looking at the
resulting complexity of the problem on such graphs.  Such graphs $G$ are
commonly referred to as $(H_1,H_2,\ldots,H_t)$-free. We write $H$-free if only
one graph $H$ is excluded. For $H$-free graphs, a complete classification is
known \cite{kktw}: {\sc Coloring} is polynomial-time solvable for $H$-free
graphs if $H$ is an induced subgraph of $P_4$ or $P_3+P_1$ (disjoint union of
$P_1$ and $P_3$) and is NP-complete otherwise. Some results excluding two
induced subgraphs are also known \cite{survey-col}.

For {\sc $k$-Coloring} where $k\geq 3$ is fixed, the situation is different.
For $H$-free graphs, the problem is NP-complete for every $k\geq 3$ if $H$
contains a cycle or a claw $K_{1,3}$, and so the remaining cases of interest are
when $H$ is the disjoint unions of paths. It is therefore natural to study the
case when $H$ is the $t$-vertex path $P_t$.  This problem has enjoyed a long and
exciting history of incremental results \cite{p6k3-2, 3col-p6, maria-p7-1,
maria-p7-2, pavol-shenwei, p5-col, shenwei, narrowing, 3col-p6-1, 4col-pk, p6k3,
woeginger-sgall}, since the seminal paper \cite{woeginger-sgall}, and
culminating in the most general case proved recently by Shenwei Huang
\cite{shenwei} who established that {\sc $k$-Coloring} is NP-complete if $k\geq
4$ and $t\geq 7$, and if $k\geq 5$ and $t\geq 6$.  (See \cite{survey-col} for a
survey of these results.)

Unlike the hardness proofs, less is known about polynomial-time cases.  The case
$t\leq 4$ is not interesting, since an optimal coloring can be found in linear
time \cite{cographs}. For $t=5$, a polynomial-time algorithm \cite{p5-col}
exists for every fixed $k$ , but the exponent of the polynomial heavily depends
on $k$.  For $t\geq 6$, because of the above hardness results, there are only
two interesting remaining cases: $t=6$ and $k=4$, and $t\geq 6$ and $k=3$.  The
latter has been established for $t=6$ \cite{3col-p6-1,3col-p6} and very recently
also for $t=7$ \cite{maria-p7-1,maria-p7-2}, while the situation when $t\geq 8$
remains wide open. 

In this paper, we consider the other remaining case $t=6$ and $k=4$. Namely,
{\sc 4-Coloring} in $P_6$-free graphs. Unlike the other outstanding cases, this
problem has actually been conjectured to be polynomial-time solvable
\cite{shenwei}. Here, we use a more fine-grained approach where, in addition, we
exclude a fixed induced $\ell$-cycle $C_\ell$, and consider the problem in
$(P_t,C_\ell)$-free graphs.  This line of research has shown a lot of promise
recently (as a way of bringing us closer to understanding the general case) with
several notable results (see \cite{survey-col}). The cases $k=3$ and
$t\in\{6,7\}$ were both established by first considering $\ell=3$
\cite{p6k3,maria-p7-1}.  For $\ell=4$, {\sc $k$-Coloring} is linear-time
solvable for every $k$ and $t$ \cite{survey-col}, and in some cases, even a
forbidden subgraph characterization is~known~\cite{pavol-shenwei}.  On the
hardness side, {\sc 4-Coloring} is NP-complete when $\ell\geq 5$, $\ell\neq 7$
and $t\geq 7$, or if $\ell=7$~and~$t\geq 9$~\cite{pavol-shenwei}.

\begin{table}\centering
\begin{tabular}{r||cccccccc}
 & $t\leq 4$ & $t=5$ & $t=6$ & $t=7$ & $t=8$ & $t\geq 9$\\
\hline\hline\rule{0pt}{2.5ex}
$k=3$ & P\cite{cographs}  & P\cite{woeginger-sgall}  & P\cite{3col-p6-1} & P\cite{maria-p7-2}  & ? & ?\\
$k=4$ & P\cite{cographs}  & P\cite{p5-col} & {\bf ?} & NPc\cite{shenwei}  & NPc\cite{4col-p8} & NPc\cite{4col-pk}\cite{woeginger-sgall}$^{^\star}$\\
$k=5$ & P\cite{cographs}  & P\cite{p5-col}  & NPc\cite{shenwei} & NPc  & NPc\cite{woeginger-sgall} & NPc\\
$k=6$ & P\cite{cographs}  & P\cite{p5-col}  & NPc & NPc{\cite{3col-p6}}  & NPc & NPc\\
$k\geq 7$ & P\cite{cographs} & P\cite{p5-col} & NPc  & NPc & NPc  & NPc\\[0.5ex]
\hline \rule{0pt}{2.5ex}
$k=\min$ & P\cite{cographs} & NPc\cite{kktw} & NPc  & NPc & NPc  & NPc
\end{tabular}
\quad
\parbox{2cm}{\vskip 3cm $^{^\star}$ for $t=12$}
\caption{Summary of the complexity status of {\sc $k$-Coloring} of $P_t$-free
graphs.\label{fig:kcol}}
\end{table}

\begin{table}\centering
\begin{tabular}{r||cccccc}
$k=4$ & $t\leq 5$ & $t=6$ & $t=7$ & $t=8 $ & $t\geq 9$\\
\hline\hline\rule{0pt}{2.5ex}
$\ell=3$ & P\cite{p5-col} & P\cite{p6k3-2} & P\cite{maria-p7-1} & ?  & ?$^{^\star}$ \\
$\ell=4$ & P\cite{p5-col} & P\cite{survey-col} & P\cite{survey-col} & P\cite{survey-col} & P\cite{survey-col}\\
$\ell=5$ & P\cite{p5-col} & P$^{^{\star\star}}$ & NPc\cite{pavol-shenwei} & NPc & NPc\\
$\ell=6$ & P\cite{p5-col} & ? & NPc\cite{pavol-shenwei} & NPc & NPc\\
$\ell=7$ & P\cite{p5-col} & ? & ? & ? & NPc\cite{pavol-shenwei}\\
$\ell\geq 8$ & P\cite{p5-col} & ? &
NPc\cite{pavol-shenwei} & NPc & NPc
\end{tabular}\qquad
\parbox{4cm}{\vskip 2.5cm $^{^\star}$ NPc for $t=22$
\cite{narrowing}\\$^{^{\star\star}}$
this paper}
\caption{Complexity status of {\sc 4-Coloring} of $(P_t,C_\ell)$-free
graphs.\label{fig:4col}}
\end{table}

Therefore a natural case to consider is {\sc 4-Coloring} when $t=6$ and
$\ell=5$. This is notably so due to the fact that for 4-colorable $P_6$-free
graphs, the 5-cycle is one of only two possible obstructions for perfection --
the other being the 7-antihole.  \medskip

This problem is the main focus of this paper. In particular, we establish that
the problem admits a polynomial-time algorithm. We summarize this as the
following main theorem.

\begin{theorem}\label{thm:main}
There exists a polynomial-time algorithm for {\sc 4-Coloring} 
$(P_6,C_5)$-free graphs.
\end{theorem}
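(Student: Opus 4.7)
The plan is to split into two cases depending on whether $G$ contains an induced $\overline{C_7}$. Observe first that in a $(P_6,C_5)$-free graph every odd hole $C_{2k+1}$ with $k\geq 3$ contains an induced $P_6$ (take six consecutive vertices), so $G$ contains no induced odd hole at all; and for $k\geq 4$ the antihole $\overline{C_{2k+1}}$ has chromatic number $k+1\geq 5$, so it cannot appear in any $4$-colorable graph. Combined with the Strong Perfect Graph Theorem, this shows that a $(P_6,C_5,\overline{C_7})$-free graph is $4$-colorable if and only if it is perfect and has $\omega\leq 4$. The latter condition can be tested in polynomial time using the known polynomial algorithms for perfect-graph recognition and for coloring perfect graphs~\cite{perfect-col}, so this case is resolved.

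In the remaining case, fix by brute-force enumeration (time $O(n^7)$) an induced copy $H$ of $\overline{C_7}$ in $G$, and reduce the problem to list $4$-coloring with initial lists $L(v)=\{1,2,3,4\}$. Enumerate the constantly many proper $4$-colorings $\varphi$ of $H$; for each such $\varphi$, pre-assign $V(H)$ according to $\varphi$ and update $L(v)$ for each $v\notin V(H)$ by deleting $\varphi(h)$ for every $h\in N(v)\cap V(H)$. It suffices to solve each residual list $4$-coloring instance on $G-V(H)$ in polynomial time. I would classify each $v\notin V(H)$ by its neighborhood trace $N(v)\cap V(H)\subseteq V(H)$; together with $\varphi$, this trace determines how large $L(v)$ can be. Exploiting $P_6$-freeness and $C_5$-freeness, the goal is to bound the realizable traces, control how vertices of different traces interact, and, after branching on the colors of a polynomially bounded number of additional pivot vertices, shrink every remaining list to size at most~$2$. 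At that point the instance reduces to $2$-SAT and is solvable in polynomial time.

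The principal obstacle is this structural analysis. Showing that the components of $G-V(H)$ attach to $H$ in a sufficiently rigid way, and that vertices retaining three or more colors in their list form a set that can be forced open by bounded branching, will require a careful case analysis over the set of realizable traces. The $C_5$-free hypothesis is crucial here: induced five-cycles attached to $H$ would allow neighborhood structures flexible enough to prevent such a list collapse, which is consistent with the fact that for $t=7$, $\ell=5$ the problem is already NP-complete~\cite{pavol-shenwei}. Carrying out this case analysis, rather than the final $2$-SAT reduction, is the true heart of the proof.
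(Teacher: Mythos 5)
Your opening reduction is correct and matches the paper's strategy in outline: $(P_6,C_5)$-freeness kills all odd holes, and since $\overline{C_{2k+1}}$ with $k\geq 4$ is not $4$-colorable, a $(P_6,C_5,\overline{C_7})$-free graph is $4$-colorable iff it is perfect and has $\omega\leq 4$; and when $\overline{C_7}$ is present, the paper also ends by fixing a coloring of it, propagating lists, and solving a $2$-SAT instance. So the skeleton is right. But there is a genuine gap in the part you flag as ``the true heart,'' and some of what you do sketch would not work as stated.

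First, branching on ``a polynomially bounded number of additional pivot vertices'' gives $4^{\mathrm{poly}(n)}$ branches, which is exponential; you need a \emph{constant} number of pivots. In the paper this set (called $R$) has at most $7$ vertices, carefully chosen among the small vertices with a neighbor outside $N(C)$ using a structural analysis of how they attach to the antihole. Second, list propagation plus $2$-SAT alone is provably insufficient: list $4$-coloring of $(P_6,C_5)$-free graphs is NP-complete \cite{narrowing}, so if your residual instances were ``just'' bounded-list instances you would not be done. The paper gets around this by supplementing the lists with global consistency constraints: for each vertex $v$ and candidate color $i$, it list-$3$-colors $G[N(v)]$ (using the $P_6$-free list-$3$-coloring algorithm of \cite{3col-p6}), and for each pair $u,v$ and pair of colors it $2$-colors $G[N(u)\cap N(v)]$; these tests encode non-local information into the final $2$-SAT. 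Third, and most importantly, you have omitted two preprocessing steps that the correctness proof depends on essentially: (i) a \emph{chromatic-cutset} cleaning phase that contracts certain complete-multipartite minimal separators (the sets of big vertices attaching to a component outside $C\cup N(C)$), which rules out $K_5$ and the $9$-antihole and guarantees a technical ``clean'' condition on every $\overline{C_7}$; and (ii) a clique-cutset decomposition à la Tarjan. Cleanliness is used in Phase~II precisely to guarantee that every uncolored component of $G-(C\cup N(C))$ has a vertex of $X\setminus Z$ complete to it, which is what makes the ``nice coloring'' extension argument (Lemma~\ref{lem:nice}) go through. Without the cleaning phase, the structural lemmas on attachments — in particular the nested-neighborhood property for nonadjacent small vertices and the $3$-or-$4$-consecutive-neighbors property — do not yield a bounded branching set or a correct list collapse. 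In short, you have correctly named the target but not supplied the decomposition (chromatic cutsets, clique cutsets), the structural antihole lemmas, the constant-size pivot set $R$, the neighborhood consistency tests, or the nice-coloring extension argument — all of which together constitute the proof.
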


We establish this by describing an algorithm for the problem. Our algorithm is
based on a special cutset decomposition and local consistency testing.  The
resulting algorithm is surprisingly simple, but its correctness rests heavily on
intricate structural properties of the considered graphs.

It is noteworthy to mention that the corresponding list version of the problem
(where vertices, in addition, have lists of allowed colors) has been recently
established to be NP-complete \cite{narrowing}. This is rather surprising, since
often the list version is no more difficult. This tends to be the case whenever
bounded width techniques are used (since these allow for easy extension to
lists). Our algorithm is not based on these techniques.

Our work suggests further possible extensions by weakening the restriction
brought about by the exclusion of the induced 5-cycle, by excluding a larger
induced subgraph such as the 5-wheel instead.  The structural properties we
uncovered seem to provide room for future improvements.

\section{Definitions and Notation}

In this paper, a graph is always finite, simple (undirected, no selfloops, no
parallel edges) and connected. The vertex and edge sets of a graph $G$ are
denoted by $V(G)$ and $E(G)$, respectively. The edge set $E(G)$ consists of
unordered pairs $\{u,v\}$.  For brevity, we write $uv$ to denote the edge
$\{u,v\}$. 

We write $N(v)$ to denote the neighborhood of $v$, i.e., the set of vertices
$u\neq v$ where $uv\in E(G)$. Note that $v\not\in N(v)$. For $X\subseteq V(G)$,
we write $N(X)$ to denote the neighbors of $X$, i.e., the vertices in
$V(G)\setminus X$ with a neighbor in $X$. In other words, $N(X)=\bigcup_{v\in X}
\big(N(v)\setminus X\big)$. 

For $X\subseteq V(G)$, we write $G[X]$ to denote the subgraph of $G$ induced by
$X$, i.e., the graph whose vertex set is $X$ where two vertices are adjacent if
and only if they are adjacent in $G$.  We write $G-X$ to denote the subgraph of
$G$ induced by $V(G)\setminus X$. We write $G-x$ in place of $G-\{x\}$.  A
connected component of $G$ is a maximal set $X$ such that $G[X]$ is connected.

A set $X\subseteq V(G)$ is a {\em clique} if all vertices in $X$ are pairwise
adjacent. A set $X\subseteq V(G)$ is a {\em stable set} or an {\em independent
set} if no two vertices in $X$ are adjacent.  A set $X\subseteq V(G)$ is {\em
complete} to $Y\subseteq V(G)$ if every $x\in X$ is adjacent to every $y\in Y$
(in particular, $X\cap Y=\emptyset$). A set $X\subseteq V(G)$ is {\em
anticomplete} to $Y\subseteq V(G)$ if $X\cap Y=\emptyset$, and there are no
edges in $G$ with one endpoint in $X$ and the other in $Y$.  If $X$ is neither
complete nor anticomplete to $Y$, we say that $X$ is {\em mixed} on $Y$. We say
that $x$ is complete to, anticomplete to, or mixed on $Y$ if $\{x\}$ is complete
to, anticomplete to, or mixed on $Y$, respectively.  

The complement of $G$ is the graph with vertex set $V(G)$ where two vertices are
adjacent if and only if they are not adjacent in $G$.

A $k$-{\em coloring} of $G$ is a mapping $c:V(G)\rightarrow \{1,2,\ldots,k\}$
such that $c(u)\neq c(v)$ for all $uv\in E(G)$.  The chromatic number of $G$ is
the smallest $k$ for which $G$ has a $k$-coloring.

In a {\em partial} coloring we allow some vertices to not have a value assigned
yet. We say that these vertices are {\em uncolored}.  Vertices $v$ of $G$
sometimes come with lists $L(v)\subseteq \{1,\ldots,k\}$ of permitted colors. We
say that a coloring $c$ {\em respects} lists $L$ if $c(v)\in L(v)$ for all $v\in
V(G)$.  

We write $x_1\tp x_2\tp\ldots \tp x_t$ to denote a path in $G$ going through
vertices $x_1,x_2,\ldots,x_t$ in this order.  The path may not be induced.
Similarly, we write $x_1\tp x_2\tp\ldots\tp x_t\tp x_1$ to denote a (not
necessarily induced) cycle in $G$ going through $x_1,x_2,\ldots,x_t$ and back to
$x_1$.  We write $P_k$ to denote the $k$-vertex path, and write $C_k$ to denote
the $k$-vertex cycle. We write $K_k$ to denote the complete graph on $k$
vertices. If a cycle is induced in $G$, it is called a {\em hole}.  An induced
cycle in the complement of $G$ is called an {\em antihole}. An induced
$k$-vertex-cycle in the complement of $G$ is called a $k$-antihole.

A graph $G$ is perfect if for every induced subgraph $H$ of $G$, the size of a
largest clique in $H$ equals the chromatic number of $H$.  Observe that a
$(2k+1)$-antihole requires $k+1$ colors to color.  Odd holes and odd antiholes
are the obstructions to perfection \cite{spgt}.

\section{Overview of the algorithm}

\noindent The algorithm has of 2 phases: {\bf cleaning} and {\bf coloring}.  
The input is a $(P_6,C_5)$-free graph~$G$.

In the {\em cleaning phase}, we run a polynomial-time algorithm ({\bf Phase I
algorithm}) on $G$, that either finds that $G$ is not 4-colorable, or outputs a
{\em clean} $(P_6,C_5)$-free graph $G'$ (notion of clean defined later) such
that $G$ is 4-colorable if and only if $G'$ is, and such that a 4-coloring of
$G'$ can be in polynomial time extended to a 4-coloring of~$G$. 

The algorithm, in particular, finds a so-called {\em chromatic cutset} and
contracts it into a clique.  After this step, $G'$ has no $K_5$ and no
9-antihole.

We then apply to $G'$ the clique-cutset decomposition algorithm of Tarjan
\cite{tarjan}.  This produces (in polynomial time) a linear-size list  $\cal G$
of sugraphs of $G'$ such that no graph in the list has a clique cutset, and $G'$
is 4-colorable if and only if every graph in the list is. 

In the {\em coloring phase}, we test 4-colorability for each graph $G''$ in the
list $\cal G$.  If $G''$ is perfect, then we declare that $G''$ is 4-colorable
(since $G'$ and hence $G''$ has no $K_5$).  To find out if $G''$ is perfect we
only need to check for a 7-antihole.  

Using this antihole and the fact that $G''$ contains no clique cutset, we apply
another polynomial time algorithm ({\bf Phase II algorithm}) which either
outputs a 4-coloring of $G''$ or determines that none exists.  A subroutine of
this algorithm is the list 3-coloring algorithm from \cite{3col-p6}.  If a
4-coloring exists for each $G''$, then these colorings are combined (by
permuting colors and expanding chromatic cutsets) to produce a 4-coloring of
$G$.\smallskip

We summarize the key steps as follows (further details left for later).
\newpage

\begin{itemize}[]
\item {\bf Input:} a $(P_6,C_5)$-free graph $G$

\item {\bf Phase I. cleaning:} iteratively apply Phase I reduction algorithm as long as
needed\\[1ex]
\hspace*{1em}{\em/* now $G$ is clean or it is determined that it is not 4-colorable */}

\item {\bf Clique cutsets:} apply clique cutset decomposition to remove all
clique cutsets\\[1ex]
\hspace*{1em}{\em/* now $G$ is clean and has no clique cutset */}

\item {\bf Perfectness:} test if $G$ is perfect. If so, then $G$ is 4-colorable.\\[1ex]
\hspace*{1em}{\em/* now $G$ is clean, has no clique cutset, and contains a 7-antihole */}

\item {\bf Phase II. coloring:} apply Phase II algorithm to determine
if $G$ is 4-colorable
\end{itemize}

The rest of the paper is structured as follows. First in Section
\ref{sec:antihole}, we derive some important properties of 7-antiholes and their
neighbors. Then in Section \ref{sec:cleaning}, we describe the notion of a clean
graph, desribe the Phase I cleaning procedure, and prove its correctness.
Finally, in Section \ref{sec:coloring}, we describe the Phase II coloring
algorithm and prove its correctness. Both proofs are accompanied by a complexity
analysis to ensure polynomial running time.

\section{Structural properties of antiholes}\label{sec:antihole}

Throughout this section, $G$ shall refer to a $(P_6,C_5)$-free graph containing
a 7-antihole. Let $C=\{v_0,v_1,\ldots,v_6\}$ denote the vertex set of this
antihole where $v_iv_j\in E(G)$ if and only if $2\leq|i-j|\leq 5$.  All index
arithmetic involving the vertices in $C$ shall be be treated as reduced modulo
7.  Vertices $v_i$ and $v_{i+1}$ are said to be {\em consecutive} in $C$.
Following the natural cyclical order of the antihole, for $i<j$, we say that
vertices $v_i,v_{i+1},\ldots,v_{j}$ are {\em consecutive}, and for $i>j$,
vertices $v_i,v_{i+1},\ldots,v_6,v_0,v_1,\ldots,v_{j}$ are {\em
consecutive}.\smallskip

Let $X$ denote the set of all vertices of $G$ with a neighbor in $C$, and let
$Y$ denote the set $V(G)\setminus (C\cup X)$.  The vertices $x$ in $X$ are of
the following three types.\smallskip

\begin{compactitem}
\item If $x$ is adjacent only to $v_i$ among the vertices in $C$, we say that
$x$ is a {\bf leaf} at $v_i$. 
\item If $x$ is complete to a {\em triangle} of $C$ (three pairwise adjacent
vertices in $C$), we say that $x$ is a {\bf big} vertex. 
\item If $x$ is neither a leaf nor a big vertex, we say that $x$ is a {\bf
small} vertex.
\end{compactitem}
\smallskip

\noindent We let $Z$ denote the set of all {\bf leaves}, and let $Z_\ell$ denote
the set of all {\bf leaves at} $v_\ell$.

In what follows, we prove a few useful facts about the structure of the vertices
in $X\cup Y$.

\begin{lemma}The following properties hold for the vertices in $X\cup Y$.\smallskip
\begin{compactenum}[\rm~~(1)]
\item
Let $x\in X$ be a vertex complete to $\{v_i,v_{i+1}\}$ for some $i$. Then
$x$ is also complete to at least one of $v_{i-1},v_{i+2}$.\smallskip
\item
Let $x\in X$ be a small vertex or a leaf. Then one of the following holds:
\begin{compactenum}[\rm({\theenumi}.1)]
\item $x$ is anticomplete to $\{v_{i-1},v_{i+2}\}$ and adjacent to exactly one of 
$v_i,v_{i+1}$ for some~$i$.
\item $x$ has exactly 3 or exactly 4 consecutive neighbors in $C$.
\end{compactenum}\smallskip
\item
Let $\ell\in\{0,1,\ldots,6\}$ and let $x\in X \setminus Z_\ell$ be a vertex with
a neighbor $u\in Y\cup Z_\ell$. Then 
\begin{compactenum}[\rm({\theenumi}.1)]
\item
$x$ is not a leaf.
\item
If $x$ is adjacent to $v_i$ and anticomplete to $\{v_{i-1},v_{i+1},v_{i+2}\}$,
then $u\in Z_\ell$ and $\ell\in\{i,i+1,i+2\}$. 
\item
If $x$ is mixed on an edge $uv$ where $v\in Y\cup Z_\ell$, then $x$
has at least 5 consecutive neighbors in $C$.
\end{compactenum}\smallskip
\item
Let $x\in X$ be a small vertex with a neighbor in $Y\cup Z$. Then
\begin{compactenum}[\rm({\theenumi}.1)]
\item $x$ has exactly 3 or exactly 4 consecutive neighbors in $C$, and
\item if $v_j\not\in N(x)$, then
\begin{compactenum}[\hspace*{-2em}\rm({\theenumi.\theenumii}.1)]
\item if $x$ has a neighbor in $Z_j$, then $x$ is complete to
$\{v_{j+2},v_{j+3},v_{j-3},v_{j-2}\}$,
\item if $Z_j\neq\emptyset$, then $x$ is anticomplete to
$\{v_{j-1},v_j,v_{j+1}\}$.
\end{compactenum}
\end{compactenum}\smallskip
\item
Let $x_1,x_2 \in X$ be non-adjacent small vertices, both with a neighbor in
$Y\cup Z$.\\  Then $N(x_1)\cap C \subseteq N(x_2)\cap C$ or $N(x_2)\cap
C\subseteq N(x_1)\cap C$.
\end{compactenum}

\end{lemma}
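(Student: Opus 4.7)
The plan is to prove all five parts by exhibiting, whenever the claim fails, either an induced $P_6$ or an induced $C_5$ in $G$, contradicting the standing hypothesis. The main structural feature of the $7$-antihole $C$ that I shall exploit is that each $v_i$ has exactly two non-neighbors, $v_{i-1}$ and $v_{i+1}$, and that the only induced $P_4$'s in $C$ starting at a fixed $v_i$ are $v_i\tp v_{i+2}\tp v_{i-1}\tp v_{i+1}$ and its mirror $v_i\tp v_{i-2}\tp v_{i+1}\tp v_{i-1}$; these two paths will be the common ingredient of essentially every witness below.

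Part (1) is then a direct verification: under its hypothesis, $x\tp v_i\tp v_{i+2}\tp v_{i-1}\tp v_{i+1}\tp x$ is an induced $C_5$, because the required non-edges are precisely the consecutive pairs $v_i v_{i-1}$, $v_i v_{i+1}$, $v_{i+1} v_{i+2}$ of the antihole together with the non-edges $x v_{i-1}$ and $x v_{i+2}$ asserted by the hypothesis. Part (2) follows by a short case analysis on $N(x)\cap C$: the assumption that $x$ is not big means $N(x)\cap C$ contains no three pairwise non-consecutive vertices, and (1) forces any two consecutive ones to extend to a block of length at least three. A finite enumeration then leaves only two surviving neighborhood types: a consecutive block of length exactly $3$ or $4$ on $C_7$ (giving (2.2)), or a pair of non-consecutive vertices (giving (2.1)); the leaf case falls under (2.1) immediately.

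For part (3), I would prepend the edge $ux$ to the canonical $P_4$ described above. Under the hypothesis of (3.2), the walk $u\tp x\tp v_i\tp v_{i+2}\tp v_{i-1}\tp v_{i+1}$ is an induced $P_6$ whenever $u$'s unique potential neighbor in $C$ lies outside $\{v_{i-1},v_i,v_{i+1},v_{i+2}\}$; this immediately rules out $u\in Y$ and restricts $\ell$ to $\{i-1,i,i+1,i+2\}$. The single remaining value $\ell=i-1$ is eliminated by the induced $C_5$ $u\tp x\tp v_i\tp v_{i+2}\tp v_{i-1}\tp u$, in which the missing chord $uv_{i-1}$ is now present. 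Part (3.1) is a special case where $x$ is a leaf at $v_k$: applying the above witnesses with $i=k$ (together with their mirrors) knocks out every admissible $\ell$. Part (3.3) then follows by combining (3.2) with (2): if $x$ has at most four neighbors in $C$, some $v_i\in N(x)\cap C$ satisfies the hypothesis of (3.2) (an end of the block in case (2.2), or the lone neighbor in case (2.1)), and prepending $v\tp u$ to the $P_6$ of (3.2) yields a new induced $P_6$; the chord checks succeed because $v\in Y\cup Z_\ell$ has at most one neighbor in $C$ and $\ell$ is already constrained.

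The last two parts are consequences of (3) combined with more careful path/cycle hunting. For (4.1), if $x$ were a small vertex with a neighbor in $Y\cup Z$ satisfying (2.1), then $N(x)\cap C$ would be $\{v_0,v_2\}$ or $\{v_0,v_3\}$ up to rotation, but (3.2) would apply at two distinct vertices of $N(x)\cap C$, producing two disjoint windows for $\ell$, a contradiction. For (4.2), I would use (4.1) to place $N(x)\cap C$ as a $3$- or $4$-block and then apply (3.3) along edges incident to a $w\in Z_j$: any mismatch between the block and the set $\{v_{j-3},v_{j-2},v_{j+2},v_{j+3}\}$ of neighbors of $v_j$ in $C$ produces an induced $P_6$ through $w$, which yields (4.2.1) and (4.2.2). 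Finally for (5), assume $N(x_1)\cap C$ and $N(x_2)\cap C$ are incomparable blocks; pick $v_p\in N(x_1)\setminus N(x_2)$ and $v_q\in N(x_2)\setminus N(x_1)$ that are cyclically close on $C_7$, and build either an induced $P_6$ of the form $u_1\tp x_1\tp v_p\tp v_q\tp x_2\tp u_2$ (with $u_i\in N(x_i)\cap(Y\cup Z)$) or an induced $C_5$ through a common block vertex, using (4.2) to control which external vertices are available. The main obstacle throughout, and particularly in (3.3), (4.2), and (5), is controlling chords in these witnesses; here (4.1) and (4.2) pull their weight by sharply limiting which external vertices can be adjacent to which $v_j$'s.
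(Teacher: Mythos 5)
Your overall strategy (building induced $P_6$'s and $C_5$'s from the canonical $P_4$'s $v_i\tp v_{i+2}\tp v_{i-1}\tp v_{i+1}$ of the antihole, and threading external vertices onto them) is the same as the paper's, and your arguments for (1), (2), (3.1), (3.2) all match. However, there is a concrete gap in (3.3), and (4.2) and (5) are sketched too loosely to constitute proofs.

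The gap in (3.3): you claim that if $x$ has at most four neighbors in $C$, then some $v_i\in N(x)\cap C$ satisfies the hypothesis of (3.2), citing ``an end of the block in case (2.2)'' as a witness. This is false. If $N(x)\cap C$ is a consecutive block of length $3$ or $4$, say $\{v_1,v_2,v_3\}$, then no $v_i\in N(x)$ has $x$ anticomplete to $\{v_{i-1},v_{i+1},v_{i+2}\}$: at $v_1$ the set is $\{v_0,v_2,v_3\}$ and $x$ hits $v_2,v_3$; at $v_3$ the set is $\{v_2,v_4,v_5\}$ and $x$ hits $v_2$; the middle is even worse. So (3.2) is simply inapplicable in the (2.2) case, which is exactly the case that matters for (3.3) (and which is also the only case that can occur for a small vertex with a neighbor in $Y\cup Z$, by (4.1)). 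Moreover, (3.3) applies to arbitrary $x\in X$, including big vertices, to which (2) says nothing at all. The paper handles (3.3) by a different device: it introduces ``good edges'' $v_iv_j$ of $C$ inside $N(x)$ with $\{v_i,v_{i-1}\}$ complete to $\{v_j,v_{j+1}\}$ and $x$ anticomplete to $\{v_{i-1},v_{j+1}\}$, shows such an edge yields a $P_6$ or $C_5$ through $u,v,x$, and then shows a good edge must exist whenever $x$ is mixed on $C$ and lacks $5$ consecutive neighbors. Your sketch has no substitute for this idea, so (3.3) is genuinely missing a proof.

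For (4.2) and (5), the direction is right (hunt for $P_6$'s of the form $u_1\tp x_1\tp v_p\tp v_q\tp x_2\tp u_2$ or $C_5$'s through a leaf), and the paper does exactly this. But what makes these parts work is careful bookkeeping of when $u_1,u_2$ can be adjacent to the cycle vertices chosen, which requires a subclaim that if $u_1\neq u_2$ then $\{x_1,u_1\}$ is anticomplete to $\{x_2,u_2\}$ (using (3.2) again), and a dedicated case split on $k=i$ versus $k=j$. Your phrase ``using (4.2) to control which external vertices are available'' gestures at this, but you cannot use (4.2) inside the proof of (4.2) itself, and the chord control is the entire difficulty; the sketches would need to be fully expanded to be accepted.
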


\begin{proof}
In the following text, we prove each of \ref{clm:zero}-\ref{clm:y2}
individually.

\begin{claim}\label{clm:zero}
Let $x\in X$ be a vertex complete to $\{v_i,v_{i+1}\}$ for some $i$. Then $x$ is
also complete to at least one of $v_{i-1},v_{i+2}$.
\end{claim}

\begin{proofclaim}
If $x$ is anticomplete to $\{v_{i-1},v_{i+2}\}$, then $x\tp v_i\tp v_{i+2}\tp
v_{i-1}\tp v_{i+1}\tp x$ is an induced 5-cycle in $G$.
\end{proofclaim}

\begin{claim}\label{clm:attach}
Let $x\in X$ be a small vertex or a leaf. Then one of the following holds:
\begin{compactenum}[\rm({\theclaim}.1)]
\item $x$ is anticomplete to $\{v_{i-1},v_{i+2}\}$ and adjacent to exactly one of 
$v_i,v_{i+1}$ for some~$i$.\setreflabel{(\theclaim.\theenumi)}\label{clm:3.1}
\item $x$ has exactly 3 or exactly 4 consecutive neighbors in $C$.
\setreflabel{(\theclaim.\theenumi)}\label{clm:3.2}
\end{compactenum}
\end{claim}

\begin{proofclaim}
Since $x$ is in $X$, it has at least one neighbor in $C$. Since $x$ is not a big
vertex, it is not complete to $C$.  Thus we may assume by symmetry that
$v_0\not\in N(x)$ but $v_1\in N(x)$.   If $x$ is non-adjacent to $v_3$, then by
\ref{clm:zero}, $x$ is also non-adjacent to $v_2$, and we obtain outcome
\ref{clm:3.1} for $i=1$.  Thus we may assume that $x$ is adjacent to $v_3$.
This means that $x$ is anticomplete to $\{v_5,v_6\}$, since otherwise $x$ is
complete to a triangle of $C$ implying that $x$ is big, but we assume otherwise.
If now $x$ is adjacent to $v_2$, then we obtain outcome \ref{clm:3.2}.  Thus we
may assume that $x$ is non-adjacent to $v_2$, and therefore we obtain outcome
\ref{clm:3.1} for $i=0$.

\end{proofclaim}

\begin{claim}
Let $\ell\in\{0,1,\ldots,6\}$ and let $x\in X \setminus Z_\ell$ be a vertex with
a neighbor $u\in Y\cup Z_\ell$. Then 
\begin{compactenum}[\rm({\theclaim}.1)]
\item
$x$ is not a leaf.
\setreflabel{(\theclaim.\theenumi)}\label{clm:y1-1}
\item
If $x$ is adjacent to $v_i$ and anticomplete to $\{v_{i-1},v_{i+1},v_{i+2}\}$,
then $u\in Z_\ell$ and $\ell\in\{i,i+1,i+2\}$. 
\setreflabel{(\theclaim.\theenumi)}\label{clm:y1-1.5}
\item
If $x$ is mixed on an edge $uv$ where $v\in Y\cup Z_\ell$, then $x$
has at least 5 consecutive neighbors in $C$.
\setreflabel{(\theclaim.\theenumi)}\label{clm:y1-2}
\end{compactenum}
\end{claim}

\begin{proofclaim}
First we prove \ref{clm:y1-1.5}. Assume that $x$ is adjacent to $v_i$ and
anticomplete to $v_{i-1},v_{i+1},v_{i+2}$. Then since $u\tp x\tp v_i\tp
v_{i+2}\tp v_{i-1}\tp v_{i+1}$ is not an induced $P_6$ in $G$, it follows that
$u\in Z_\ell$ and $\ell\in\{i-1,i,i+1,i+2\}$. If $\ell=i-1$, then $u\tp x\tp
v_i\tp v_{i+2}\tp v_{i-1}\tp u$ is an induced 5-cycle in $G$, a contradiction.
This proves \ref{clm:y1-1.5}.
\smallskip

For \ref{clm:y1-1}, suppose that $x\in Z_i$ for some $i$.  Then $x$ is adjacent
to $v_i$ and anticomplete to $\{v_{i-2},$ $v_{i-1},$ $v_{i+1},v_{i+2}\}$. Using
\ref{clm:y1-1.5} twice, we conclude that $\ell=i$. But $x\in X\setminus Z_\ell$,
a contradiction. This proves \ref{clm:y1-1}.\smallskip

For \ref{clm:y1-2}, suppose that $x$ is mixed on an edge $uv$ where $v\in Y\cup
Z_\ell$. By \ref{clm:y1-1}, $x$ is not a leaf, since it is adjacent to $u\in
Y\cup Z_\ell$.  Thus $x$ has at least two non-consecutive neighbors in $C$ by
\ref{clm:zero}. Observe further by \ref{clm:y1-1} that $u,v$ are either both in
$Z_\ell$ or both in $Y$, since $uv\in E(G)$.  For contradiction, assume that $x$
does not have 5 consecutive neighbors in $C$.

An edge $v_iv_j$ where $\{v_i,v_j\}\subseteq N(x)$ is {\em good} if
$\{v_i,v_{i-1}\}$ is complete to $\{v_j,v_{j+1}\}$, and  $x$ is anticomplete to
$\{v_{i-1},v_{j+1}\}$.  Suppose there exists a good edge $v_iv_j$. Then by
symmetry (between $i$ and~$j$), we may assume that $\ell\not\in\{j,j+1\}$.  If
$v_{i-1}$ is anticomplete to $\{u,v\}$, then $v\tp u\tp x\tp v_j\tp v_{i-1}\tp
v_{j+1}$ is an induced $P_6$ in $G$, a contradiction.  Thus $\ell=i-1$ and
$u,v\in Z_\ell$.  But now $v_{i-1}\tp u\tp x\tp v_i\tp v_{j+1}\tp v_{i-1}$ is an
induced 5-cycle in $G$, a contradiction.  This shows that no good edge exists.

Suppose that $x$ is complete to $\{v_i,v_{i+1}\}$ for some $i$.  By
\ref{clm:zero}, $x$ is adjacent to at least one $v_{i-1},v_{i+2}$. By symmetry,
we may assume that $v_{i+2}\in N(x)$. Since $x$ does not have 5 consecutive
neighbors, we may further assume by symmetry that $v_{i-1}\not\in N(x)$ and for
some $j\in\{i+2,i+3\}$, we have $v_j\in N(x)$ and $v_{j+1}\not\in N(x)$. But now
$v_iv_j$ is a good edge, impossible.

This shows that $x$ is not complete to any two consecutive vertices in $C$.
Since is mixed on $C$, it follows that $x$ is complete to $v_i$ and anticomplete
to $\{v_{i-1},v_{i+1},v_{i+2}\}$ for some $i$. Since $x$ has at least two
non-consecutive neighbors in $C$,  there exists $j\in\{i+3,i+4,i+5\}$ such that
$v_j\in N(x)$ and $v_{j-1}\not\in N(x)$.  But now $v_jv_i$ is a good edge, a
contradiction. This proves \ref{clm:y1-2}.

\end{proofclaim}

\begin{claim}\label{clm:y1-3}
Let $x\in X$ be a small vertex with a neighbor in $Y\cup Z$. Then
\begin{compactenum}[\rm({\theclaim}.1)]
\item $x$ has exactly 3 or exactly 4 consecutive neighbors in $C$, and
\setreflabel{(\theclaim.\theenumi)}\label{clm:6.1}
\item if $v_j\not\in N(x)$, then
\setreflabel{(\theclaim.\theenumi)}\label{clm:6.2}
\begin{compactenum}[\hspace*{-2em}\rm({\theclaim.\theenumi}.1)]
\item if $x$ has a neighbor in $Z_j$, then $x$ is complete to
$\{v_{j+2},v_{j+3},v_{j-3},v_{j-2}\}$,
\setreflabel{(\theclaim.\theenumi.\theenumii)}\label{clm:6.2.1}
\item if $Z_j\neq\emptyset$, then $x$ is anticomplete to
$\{v_{j-1},v_j,v_{j+1}\}$.
\setreflabel{(\theclaim.\theenumi.\theenumii)}\label{clm:6.2.2}
\end{compactenum}
\end{compactenum}
\end{claim}

\begin{proofclaim}
For \ref{clm:6.1}, let $y\in Y\cup Z$ be a neighbor of $x$. Thus $y\in Y\cup
Z_\ell$ for some $\ell$.  Since $x$ is a small vertex, we may assume by
\ref{clm:attach} that for contradiction \ref{clm:3.1} holds.  Namely, we may
assume by symmetry that $x$ is adjacent to $v_1$ and anticomplete to
$\{v_0,v_2,v_3\}$.  From \ref{clm:y1-1.5}, we deduce that $y\in Z_\ell$ and
$\ell\in\{1,2,3\}$.  Since $x$ is not a leaf, there exists $j\in\{4,5,6\}$ such
that $v_j\in N(x)$ and $v_{j+1}\not\in N(x)$. Consider largest such $j$.  If
$j=4$, then the maximality of $j$ implies that $x$ is anticomplete to
$\{v_5,v_6\}$ and so we deduce by \ref{clm:y1-1.5} that $\ell\in\{4,5,6\}$, a
contradiction.  If $j=5$, then $x$ is non-adjacent to $v_4$ by \ref{clm:zero},
and so we deduce $\ell\in\{5,6,0\}$ by \ref{clm:y1-1.5}, a contradiction.  It
follows that $v_6\in N(x)$.  If $v_4\in N(x)$, then $x$ is complete to a
triangle $v_1,v_4,v_6$ of $C$ which is impossible, since $x$ is a small vertex.
Thus $v_4\not\in N(x)$ and so $v_5\not\in N(x)$ by \ref{clm:zero}. But now
$\ell\in\{4,5,6\}$ by \ref{clm:y1-1.5}, a contradiction.  This proves
\ref{clm:6.1}.  \smallskip

For \ref{clm:6.2.1}, assume that $v_j\not\in N(x)$ and that $x$ has a neighbor
$y\in Z_j$.  Suppose first that $v_{j+1}\in N(x)$. Then $v_{j-2}\not\in N(x)$,
since by \ref{clm:6.1}, $x$ has exactly 3 or 4 consecutive neighbors in $C$.
Thus $y\tp v_j\tp v_{j-2}\tp v_{j+1}\tp x\tp y$ is an induced 5-cycle in $G$, a
contradiction. We therefore conclude that $v_{j+1}\not\in N(x)$ and by symmetry,
also $v_{j-1}\not\in N(x)$.  If also $v_{j+2}\not\in N(x)$, then $x\tp y\tp
v_j\tp v_{j+2}\tp v_{j-1}\tp v_{j+1}$ is an induced $P_6$ in $G$, a
contradiction. Thus $v_{j+2}\in N(x)$ and by symmetry, also $v_{j-2}\in N(x)$.
Now since the neighbors of $x$ in $C$ are consecutive, it follows that $x$ is
complete to $\{v_{j+2}, v_{j+3}, v_{j-2}, v_{j-3}\}$ as claimed.  This proves
\ref{clm:6.2.1}.  \smallskip

Finally for \ref{clm:6.2.2}, assume that $v_j\not\in N(x)$ and there exists
$z\in Z_j$. Recall that $x$ has a neighbor $y\in Y\cup Z$.  Thus by
\ref{clm:6.1} and \ref{clm:6.2.1}, we may assume that $x$ is not adjacent to
$z$, and $y\in Y\cup Z_\ell$ where $\ell\neq j$.  Now for contradiction, assume
by symmetry that $v_{j+1}\in N(x)$.  Suppose first that $yz\in E(G)$.  If $y\in
Y$, then $x$ is mixed on the edge $yz$ of $Y\cup Z_j$, but that contradicts
\ref{clm:y1-2}, since $x$ is a small vertex. If $y\in Z_\ell$, then $y$ a leaf
in $X\setminus Z_j$ and has a neighbor in $Z_j$, contradicting \ref{clm:y1-1}.
Thus $y$ is not adjacent to $z$, and moreover, $x$ is not adjacent to $v_{j-2}$,
since $x$ has exactly 3 or 4 consecutive neighbors in $C$ by \ref{clm:6.1}.  Now
if $y$ is anticomplete to $\{v_{j-2},v_j,v_{j+1}\}$, then $z\tp v_j\tp
v_{j-2}\tp v_{j+1}\tp x\tp y$ is an induced $P_6$ in $G$, a contradiction. Thus
$y\in Z_\ell$ and $\ell\in\{j-2,j+1\}$. If $\ell=j+1$, then $z\tp v_j\tp
v_{j+2}\tp v_{j-1}\tp v_{j+1}\tp y$ is an induced $P_6$ in $G$. Therefore we
deduce that $\ell=j-2$.  In other words, $x$ has a neighbor in $Z_{j-2}$, namely
$y$, but $v_{j-2}\not\in N(x)$, which by \ref{clm:6.2.1} implies that $x$ is
complete to $\{v_j,v_{j+1},v_{j+2},v_{j+3}\}$. However, $v_j\not\in N(x)$, a
contradiction. This proves \ref{clm:6.2.2}.

\end{proofclaim}

\begin{claim}\label{clm:y2}
Let $x_1,x_2 \in X$ be non-adjacent small vertices, both with a neighbor in
$Y\cup Z$.\\  Then $N(x_1)\cap C \subseteq N(x_2)\cap C$ or $N(x_2)\cap
C\subseteq N(x_1)\cap C$.
\end{claim}

\begin{proofclaim}
Let $y_1$ be a neighbor of $x_1$ in $Y\cup Z$, and let $y_2$ be a neighbor of
$x_2$ in $Y\cup Z$. Therefore there exist $k$ and $\ell$ such that $y_1\in Y\cup
Z_k$ and $y_2\in Y\cup Z_\ell$ (possibly $k=\ell$ or $y_1=y_2$).  If possible
choose $y_1,y_2$ so that $y_1=y_2$.

\begin{innerclaim}

\begin{claim}[\theclaim.1]\label{clm:7.1}
If $y_1\neq y_2$, then $\{x_1,y_1\}$ is anticomplete to $\{x_2,y_2\}$.
\end{claim}

\begin{proofclaim}
If $y_1\neq y_2$, then our choice of $y_1,y_2$ implies that the two vertices
cannot be chosen to be equal (otherwise we would have chosen them equal).
Therefore $x_1y_2,x_2y_1\not\in E(G)$. It remains to show that $y_1y_2\not\in
E(G)$.  Assume, for contradiction, that $y_1y_2\in E(G)$.  Then $y_1,y_2\in
Y\cup Z_\ell$ by \ref{clm:y1-1}.  But now $x_1$ is mixed on the edge $y_1y_2$ of
$Y\cup Z_\ell$, which contradicts \ref{clm:y1-2}, since $x_1$ is a small vertex.
\end{proofclaim}

\end{innerclaim}

We say that an edge $v_iv_j\in E(G)$ is {\em good} if $v_i\in N(x_1)\setminus
N(x_2)$ and $v_j\in N(x_2)\setminus N(x_1)$. To prove \ref{clm:y2}, suppose for
contradiction that the claim of \ref{clm:y2} is false.

\begin{innerclaim}

\begin{claim}[\theclaim.2]\label{clm:7.2}
There exists a good edge.
\end{claim}

\begin{proofclaim}
Since the claim is false, there exist $v_i\in N(x_1)\setminus N(x_2)$ and
$v_j\in N(x_2)\setminus N(x_1)$. If $v_iv_j\in E(G)$, then $v_iv_j$ is a good
edge as claimed. Thus we may assume that $v_iv_j\not\in E(G)$.  Since
$v_iv_j\not\in E(G)$, we may assume by symmetry that $j=i+1$.  By \ref{clm:6.1},
both $x_1$ and $x_2$ have exactly 3 or 4 consecutive neighbors in $C$. Thus
$v_{i+2}\not\in N(x_1)$, since $v_{i+1}\not\in N(x_1)$ but $v_i\in N(x_1)$.
Also $v_{i+2}\in N(x_2)$, since $v_{i+1}\in N(x_2)$ but $v_i\not\in N(x_2)$.
But now $v_iv_{i+2}$ is a good edge as claimed. 
\end{proofclaim}

\end{innerclaim}

By \ref{clm:7.2}, we may therefore assume that there exists a good edge
$v_iv_j\in E(G)$, where $v_i\in N(x_1)\setminus N(x_2)$ and $v_j\in
N(x_2)\setminus N(x_1)$.  Suppose first that $\{y_1,y_2\}$ is anticomplete to
$\{v_i,v_j\}$.  If $y_1=y_2$, then $y_1\tp x_1\tp v_i\tp v_j\tp x_2\tp y_1$ is
an induced 5-cycle in $G$, a contradiction. Thus $y_1\neq y_2$ which implies, by
\ref{clm:7.1}, that $\{x_1,y_1\}$ is anticomplete to $\{x_2,y_2\}$.  But now
$y_1\tp x_1\tp v_i\tp v_j\tp x_2\tp y_2$ is an induced $P_6$ in $G$, a
contradiction.  This shows that $y_1\in Z_k$ and $k\in\{i,j\}$, or that $y_2\in
Z_\ell$ and $\ell\in\{i,j\}$. By symmetry, let us assume the former, namely that
$y_1\in Z_k$ and $k\in\{i,j\}$. This leads to two cases $k=i$ or $k=j$.

Suppose first that $k=i$. Recall that $v_i\not\in N(x_2)$ and
$Z_i\neq\emptyset$, since $y_1\in Z_k=Z_i$. This implies by \ref{clm:6.2.2} that
$x_2$ is anticomplete to $\{v_{i-1},v_i,v_{i+1}\}$. Since by \ref{clm:6.1} $x_2$
has exactly 3 or 4 consecutive neighbors in $C$, we deduce that $x_2$ is
complete to $\{v_{i-3},v_{i+3}\}$ and one or both of $v_{i-2},v_{i+2}$.  Now
recall that $v_i\in N(x_1)$. Since also $x_1$ has exactly 3 or 4 consecutive
neighbors, we may assume by symmetry that $v_{i+1}\in N(x_1)$ and
$v_{i-3}\not\in N(x_1)$.  Thus if $y_1=y_2$, then $y_1\tp x_1\tp v_{i+1}\tp
v_{i-3}\tp x_2\tp y_1$ is an induced 5-cycle in $G$. Therefore $y_1\neq y_2$ and
so $\{x_1,y_1\}$ is anticomplete to $\{x_2,y_2\}$ by \ref{clm:7.1}.  Recall that
$y_2\in Y\cup Z_\ell$.  Thus if $y_2\in Y$ or if $y_2\in Z_\ell$ but
$\ell\not\in\{i+1,i-3\}$, then $y_1\tp x_1\tp v_{i+1}\tp v_{i-3}\tp x_2\tp y_2$
is an induced $P_6$ in $G$. We therefore conclude that $y_2\in Z_\ell$ and
$\ell\in\{i+1,i-3\}$.  If $\ell=i+1$, then $y_1\tp v_i\tp v_{i+2}\tp v_{i-1}\tp
v_{i+1}\tp y_2$ is an induced $P_6$ in $G$, a contradiction.  Therefore
$\ell=i-3$ and we note that $Z_{i-3}\neq\emptyset$, since $y_2\in
Z_\ell=Z_{i-3}$. Recall that $v_{i-3}\not\in N(x_1)$. So by \ref{clm:6.2.2} we
have that $x_1$ is anticomplete to $\{v_{i-2},v_{i-3},v_{i+3}\}$.  But now we
deduce that $y_1\tp x_1\tp v_{i+1}\tp v_{i+3}\tp x_2\tp y_2$ is an induced $P_6$
in $G$, a contradiction.

We may therefore assume that $k=j$. Recall that $v_j\not\in N(x_1)$ and $y_1$ is
a neighbor of $x_1$ in $Z_k=Z_j$. Thus by \ref{clm:6.2.1} and \ref{clm:6.1} we
conclude that $x_1$ is complete to $\{v_{j+2},v_{j+3},v_{j-3},v_{j-2}\}$ and
anticomplete to $\{v_{j-1},v_j,v_{j+1}\}$.  Recall that $v_j\in N(x_2)$. By
\ref{clm:6.1}, $x_2$ has exactly 3 or 4 consecutive neighbors in $C$. Thus by
symmetry we may assume that $v_{j+1}\in N(x_2)$ and $v_{j-3}\not\in N(x_2)$.  If
$y_1=y_2$, then $y_1\tp x_1\tp v_{j-3}\tp v_{j+1}\tp x_2\tp y_1$ is an induced
5-cycle in $G$.  Thus $y_1\neq y_2$ and by \ref{clm:7.1}, $\{x_1,y_1\}$ is
anticomplete to $\{x_2,y_2\}$.  So if $y_2\in Y$ or if $y_2\in Z_\ell$ but
$\ell\not\in\{j+1,j-3\}$, then $y_2\tp x_2\tp v_{j+1}\tp v_{j-3}\tp x_1\tp y_1$
is an induced $P_6$ in $G$. We therefore conclude that $y_2\in Z_\ell$ and
$\ell\in\{j+1,j-3\}$. If $\ell=j+1$, then $y_1\tp v_j\tp v_{j+2}\tp v_{j-1}\tp
v_{j+1}\tp y_2$ is an induced $P_6$ in $G$. Therefore $\ell=j-3$ and we see that
$Z_{j-3}\neq\emptyset$ because $y_2\in Z_\ell=Z_{j-3}$. Recall that
$v_{j-3}\not\in N(x_2)$. Thus by \ref{clm:6.2.2}, we deduce that $x_2$ is
anticomplete to $\{v_{j+3},v_{j-3},v_{j-2}\}$.  But now $y_2\tp x_2\tp
v_{j+1}\tp v_{j+3}\tp x_1\tp y_1$ is an induced $P_6$ in $G$, a contradiction.

\end{proofclaim}

This completes the proof of the lemma.
\end{proof}

\section{Phase I. - Cleaning}\label{sec:cleaning}

Let $G$ be a $(P_6,C_5)$-free graph. We say that a 7-antihole $C$ of $G$ is {\em clean} if
\smallskip

\begin{compactenum}[(C1)]
\item no vertex is complete to $C$,
\setreflabel{(C\theenumi)}\label{enum:c1}
\item if two vertices each have 6 neighbors in $C$, then they have the same neigbors in $C$,
\setreflabel{(C\theenumi)}\label{enum:c2}
\item for every connected component $K$ of $G-\big(C\cup
N(C)\big)$ at least one of the following~holds:
\begin{compactenum}[(C\theenumi.1)]
\item $N(K)$ is a clique,
\setreflabel{(C\theenumi.\theenumii)}\label{enum:c3.1}
\item there exists a vertex complete to $K$.
\setreflabel{(C\theenumi.\theenumii)}\label{enum:c3.2}
\end{compactenum}
\setreflabel{(C\theenumi)}\label{enum:c3}
\end{compactenum}
\smallskip

We say that $G$ is {\em clean} if $G$ contains no $K_5$ and no 9-antihole, and
every 7-antihole of $G$ is clean.

\begin{lemma}\label{lem:cleaning}
There exists a polynomial-time algorithm that given a connected $(P_6,C_5)$-free
graph $G$ either
\begin{compactitem}
\item finds that $G$ is not 4-colorable, or
\item finds that $G$ is clean, or
\item outputs a smaller connected $(P_6,C_5)$-free graph $G'$ such that $G$ is
4-colorable iff $G'$ is, and such that a 4-coloring of $G'$ can be in polynomial time
extended to a 4-coloring~of~$G$.
\end{compactitem}
\end{lemma}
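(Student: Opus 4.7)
The plan is to iteratively test the defining clauses of ``clean'' in turn: at each step we either detect an obstruction certifying that $G$ is not 4-colorable, confirm the clause holds and proceed, or find a local reduction producing a strictly smaller $(P_6,C_5)$-free graph $G'$ that is 4-colorable if and only if $G$ is, with a polynomial-time extension map.

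First I would test the two global forbidden configurations. A $K_5$ is found by brute enumeration of 5-subsets, and an induced $\overline{C_9}$ by brute enumeration of 9-subsets; both are polynomial. Since $\chi(K_5)=\chi(\overline{C_9})=5$, if either is found we return ``not 4-colorable''. Next I would enumerate the 7-antiholes of $G$ (polynomial, again by brute force over 7-subsets). If none exists, the antihole conditions are vacuous, so $G$ is clean and we are done.

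For each 7-antihole $C$ I would test \ref{enum:c1}, \ref{enum:c2}, \ref{enum:c3} in turn. For \ref{enum:c1}, if any vertex $v$ is complete to $C$ then the induced subgraph $G[C\cup\{v\}]$ already requires $\chi(\overline{C_7})+1=5$ colors, so $G$ is not 4-colorable. For \ref{enum:c2}, if two vertices $x_1,x_2$ each have exactly $6$ neighbors in $C$ with distinct non-neighbors $v_i,v_j$, then using the rigid $4$-coloring pattern of $\overline{C_7}$ (color classes of sizes $2,2,2,1$) together with the classification in Section~\ref{sec:antihole}, one checks that either some small induced configuration yields a $P_6$ or $C_5$ (a contradiction), or the pair $x_1,x_2$ can be identified without changing 4-colorability. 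For \ref{enum:c3}, given a component $K$ of $G-(C\cup N(C))$ with $N(K)$ not a clique and no vertex complete to $K$, I would use the types (leaf, small, big) developed in Section~\ref{sec:antihole}, in particular the chain structure of $C$-neighborhoods of small vertices with neighbors in $Y\cup Z$ given by \ref{clm:y2}, to extract a chromatic cutset $S\subseteq N(K)$: a separator whose vertices are forced to take all four colors in every 4-coloring of $G$. Contracting $S$ into a clique yields a strictly smaller graph $G'$ with the required equivalence; the reverse extension of a coloring of $G'$ to $G$ is immediate by inheriting colors through the contraction.

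Each reduction strictly decreases $|V(G)|$, so the algorithm terminates in at most $|V(G)|$ rounds, each polynomial, giving overall polynomial running time. The main obstacle will be \ref{enum:c3}: producing the chromatic cutset $S$ explicitly requires a careful case analysis using all parts of the lemma of Section~\ref{sec:antihole}, and proving the contracted graph remains $(P_6,C_5)$-free depends on the fact that no induced $P_6$ passes through $K$ together with the structural restrictions on neighbors of the antihole. Verifying that every $4$-coloring of the contracted graph lifts back to one of $G$ — equivalently, that the cutset can always be properly 4-colored consistently with the rest — is the key combinatorial content of the reduction.
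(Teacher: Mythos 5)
Your overall framework — test each defining clause of ``clean'' and at each step either certify non-4-colorability, confirm the clause, or produce a strictly smaller equivalent graph — is exactly the paper's framework, and your treatment of the $K_5$/$\overline{C_9}$ tests and of \ref{enum:c1} is correct and matches the paper.

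However, your treatment of \ref{enum:c2} is wrong. You propose that a failure of \ref{enum:c2} leads to a case split: either a $P_6$/$C_5$ appears (contradiction), or ``the pair $x_1,x_2$ can be identified without changing 4-colorability.'' Neither alternative is right. The correct observation is that a failure of \ref{enum:c2} is a direct certificate of non-4-colorability: if $N(x_1)\cap C = C\setminus\{v_i\}$ and $N(x_2)\cap C = C\setminus\{v_j\}$ with $i\neq j$, then in any 4-coloring of $G$ the rigid 2-2-2-1 coloring of the 7-antihole forces $c(x_1)=c(v_i)$ and $c(x_2)=c(v_j)$ with these two colors distinct and each appearing exactly once on $C$; hence the remaining five vertices $C\setminus\{v_i,v_j\}$ would have to use only two colors, which is impossible because they contain a triangle. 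There is no residual case in which $x_1,x_2$ could coexist and be ``identified''; such a pair simply cannot occur in a 4-colorable graph. This is a genuine gap, not a cosmetic difference.

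Your handling of \ref{enum:c3} has the right high-level idea (a chromatic cutset $S=N(K)$ is contracted, $P_6$/$C_5$-freeness of the result must be argued, and 4-colorings of $G'$ lift to $G$), but two details should be corrected. First, a chromatic cutset is not ``a separator whose vertices are forced to take all four colors''; it is a cutset that admits a partition into pairwise complete independent sets each of which is monochromatic in every 4-coloring, and it is each such part that is contracted to a single vertex (the parts then form a clique in $G'$). Second, the structural lemma you invoke for the neighborhood structure of $N(K)$ is not \ref{clm:y2}: since no vertex is complete to $K$, every $u\in N(K)$ is mixed on an edge of $K$, and by \ref{clm:y1-2} such $u$ has at least 5 consecutive neighbors in $C$ (so these are big vertices, not the small vertices of \ref{clm:y2}); the relevant nested/comparable-neighborhood structure then follows from an argument specific to this situation (paper's \ref{clm:x2}), combined with \ref{enum:c1}/\ref{enum:c2} and the $K_5$- and $\overline{C_9}$-freeness of $G$. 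You correctly flag that proving the contracted graph stays $(P_6,C_5)$-free is the subtle point, but the proof actually relies on $S$ being a minimal separator (every $S$-vertex has neighbors on both sides), not on ``no induced $P_6$ passes through $K$'' per se.
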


We shall refer to the algorithm from the above lemma as {\bf Phase I algorithm}.
For the proof we need to discuss a special type of cutset that we shall be
dealing with.

\subsection{Chromatic cutset}

Let $G$ be a graph.  A set $S\subseteq V(G)$ is said to be {\em 1-chromatic} if
$S$ is an independent set of $G$ such that in every 4-coloring of $G$ the
vertices of $S$ receive the same color. In particular, every vertex of $G$
constitutes a 1-chromatic set.

It seems natural to identify the vertices of a 1-chromatic set $S$ into a single
vertex. Clearly, this does not change the chromatic number of the graph.
However, it may create long induced paths or cycles. In order to make sure this
does not happen, we need to be more careful when we apply such a transformation.
When $S$ is part of a special cutset, this can be achieved as described
below.\smallskip

A set $S\subseteq V(G)$ is a {\em cutset} of $G$ if $G$ is connected but $G-S$
is disconnected. A {\em clique cutset} is a cutset that is also a clique.  A
cutset of $G$ is a {\em chromatic cutset} if it admits a partition into
1-chromatic sets that are pairwise complete to each other.  Observe that a
clique cutset is a special case of a chromatic cutset.  A cutset $S$ is said to
be a {\em minimal separator} if there are distinct connected components $K,K'$
of $G-S$ such that every vertex of $S$ has both a neighbor in $K$ and a neighbor
in $K'$. Note that a minimal separator is not necessarily an inclusion-wise
minimal cutset. We need this distinction for convenience.

By a slight abuse of terminology, we shall say that a set $S\subseteq V(G)$ is
{\em contracted} to a vertex $s$ to mean the operation of replacing $S$ in $G$
by a new vertex $s$ whose set of neighbors is precisely $N(S)$.

\begin{lemma}\label{lem:contract}
Let $G$ be a connected $(P_6,C_5)$-free graph. Suppose that $S$ contains a
chromatic cutset $S$ that is also a minimal separator.  Let
$(S_1,S_2,\ldots,S_t)$ be a partition of $S$ into pairwise complete 1-chromatic
sets. Let $G'$ be obtained from $G$ by contracting each set $S_i$ into a new
vertex $s_i$.  Then $G'$ is also a $(P_6,C_5)$-free graph.
\end{lemma}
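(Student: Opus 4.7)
I would argue by contradiction: suppose $G'$ contains an induced $P_6$ or $C_5$, call it $H'$. Since $S_1,\ldots,S_t$ are pairwise complete in $G$, the contracted vertices $s_1,\ldots,s_t$ form a clique in $G'$; as neither $P_6$ nor $C_5$ contains a triangle, at most two of the $s_i$'s lie in $V(H')$, and if two do they must be at adjacent positions of $H'$. The goal is to lift $H'$ to an induced $P_6$ or $C_5$ in $G$ by choosing a representative $v_i \in S_i$ for each $s_i \in V(H')$; this would contradict the hypothesis that $G$ is $(P_6, C_5)$-free.

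Most constraints on the lift are automatic. A non-adjacency $s_i \not\sim x$ in $H'$ with $x \in V(G) \setminus S$ already means $x$ is anti-complete to all of $S_i$ in $G$; an adjacency $s_i \sim s_j$ in $H'$ is preserved for every choice of representatives because $S_i$ is complete to $S_j$; and two representatives from the same $S_i$ are automatically non-adjacent because $S_i$ is independent ($1$-chromatic). So the only genuine requirement is that each $v_i$ be adjacent in $G$ to every non-$S$ neighbor of $s_i$ within $H'$. This is immediate when $s_i$ has at most one non-$S$ neighbor in $H'$ — which covers the entire case when two $s_i$'s lie in $V(H')$, since each then has its partner $s_j$ as one of its $H'$-neighbors — and also when the two non-$S$ neighbors share a common neighbor in $S_i$.

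The delicate case is when some $s_i \in V(H')$ has two non-$S$ neighbors $a, b$ in $H'$ and no vertex of $S_i$ is adjacent to both. I then pick $v_a \in S_i$ adjacent to $a$ but not $b$ and $v_b \in S_i$ adjacent to $b$ but not $a$ (distinct and non-adjacent in $G$). If $H' = C_5$, the path $v_a \tp a \tp c \tp d \tp b \tp v_b$, where $c, d$ are the two remaining vertices of $H'$, is straightforwardly induced $P_6$ in $G$. If $H' = P_6$ with $s_i$ at a sub-endpoint position ($2$ or $5$), I would extend the natural partial lift by a neighbor of $v_a$ or $v_b$ lying in a component of $G-S$ disjoint from the component of $G-S$ that carries the non-$s_i$ portion of $H'$; such a neighbor exists because the minimal-separator property gives $v_a$ (and $v_b$) neighbors in each of the two witnessing components $K, K'$, and at least one of $K, K'$ is disjoint from any single component of $G - S$.

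The main obstacle is $H' = P_6$ with $s_i$ at an interior position ($3$ or $4$). Here both $v_a, v_b$ have neighbors in each of the witnesses $K, K'$, and the analysis splits on whether $v_a, v_b$ share a common neighbor in $K$ (or in $K'$) and on which components of $G - S$ contain the remaining non-$S$ vertices of $H'$. A common neighbor $w$ of $v_a, v_b$ in $K$ or $K'$ yields a bridge $v_a \tp w \tp v_b$ which, combined with the lifted fragment of $H'$ and the cross-component non-adjacencies, produces an induced $P_6$ or $C_5$ in $G$. In the absence of any common neighbor, I would take a shortest path in $K$ (or in $K'$) between a neighbor of $v_a$ and a neighbor of $v_b$, prepended by $v_a$ and appended by $v_b$; shortest-path minimality together with the fact that vertices inside $K$ have no edges to the opposite-side components of $G-S$ lets me extract an induced $P_6$ (possibly as a sub-path of an induced $C_6$ or $P_7$) or an induced $C_5$ in $G$. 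In every sub-case this contradicts $(P_6, C_5)$-freeness of $G$, completing the proof.
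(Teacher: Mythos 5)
Your proof is correct and shares the paper's core approach: lift $H'$ back to $G$, dispose directly of the two-$s_i$ case and the one-$s_i$ case where a single representative works, and for the remaining hard case (one $s_i$ with two incompatible representatives $y_a,y_b$) invoke the minimal-separator property to route a path through a component of $G-S$ that avoids the relevant portion of $H'$. Where you differ is in the execution of the $H'=P_6$ sub-case: you separate the sub-endpoint positions ($k\in\{2,5\}$) from the interior ones ($k\in\{3,4\}$) and, in the interior case, further split on whether $y_a,y_b$ have a common neighbor in $K$. The paper handles all of this in a single stroke: after the symmetric reduction to $k\in\{2,3\}$, the tail $x_{k+1},\ldots,x_6$ (which has $6-k\geq 3$ vertices) lies in a single component $K''$ of $G-S$; choose $K\neq K''$ among the two witnessing components, and let $P'$ be a shortest $y_a$--$y_b$ path with interior in $K$. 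Then $P'$ has at least $3$ vertices (since $S_1$ is independent), its interior is anticomplete to the tail (different components), and $y_a,y_b$ are anticomplete to $x_{k+2},\ldots,x_6$ (these are non-neighbors of $s_1$ in $H'$), so $P'\tp x_{k+1}\tp\cdots\tp x_6$ is an induced path on at least $6$ vertices. Adopting this view would eliminate your case splits entirely. One small misstatement: the path you extract cannot come as a sub-path of an induced $C_6$, since $C_6$ contains no induced $P_6$; in any event the construction never closes into a cycle, because $y_a\in S_1$ is anticomplete to $x_6\notin N(S_1)$.
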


\begin{proof}
Since $S$ is a minimal separator, there exist  distinct connected components $K$
and $K'$ of $G-S$ such that each vertex of $S$ has both a neighbor in $K$ and a
neighbor in $K'$.

For contradiction, suppose that $G'$ contains $H$ where $H$ is $P_6$ or a
5-cycle.  If $H$ is disjoint from $\{s_1,s_2,\ldots,s_t\}$, then $H$ is an
induced path or cycle of $G$, a contradiction.  Thus $H$ contains at least one
of $s_1,s_2,\ldots,s_t$.  In fact, $H$ contains exactly one or two of these
vertices, since they are pairwise adjacent.

Suppose first by symmetry that $H$ contains $s_1,s_2$.  For $i\in\{1,2\}$, let
$A_i$ denote the set of neighbors of $s_i$ in $V(H)\setminus\{s_1,s_2\}$. Note
that $|A_i|\leq 1$, since $s_1$ is adjacent to $s_2$. This implies that there
exists $y_i\in S_i$ complete to $A_i$. In particular, $y_1$ is adjacent to
$y_2$.  Consequently, $V(H) \setminus \{s_1,s_2\} \cup \{y_1,y_2\}$ induces a
$P_6$ or a 5-cycle in $G$, a contradiction.

So we may assume that $V(H)\cap \{s_1,s_2,\ldots,s_t\}=\{s_1\}$.  If there
exists $y_1\in S_1$ complete to all neighbors of $s_1$ in $H$, then
$V(H)\setminus\{s_1\}\cup\{y_1\}$ induces a $P_6$ or a 5-cycle in $G$.  Thus we
may assume that this is not the case.  This implies that $s_1$ has exactly two
neighbors $a,b$ where $a,b\in V(G)$ and there are distinct vertices $y_a,y_b\in
S_1$ such that $ay_a,by_b\in E(G)$ and $ay_b,by_a\not\in E(G)$.  Note that
$y_ay_b\not\in E(G)$, since $S_1$ is a stable set.  Thus if $H$ is a 5-cycle,
then $V(H)\setminus\{s_1\}\cup\{y_a,y_b\}$ is an induced $P_6$ of $G$, a
contradiction.

Therefore $H$ is a path $x_1\tp x_2\tp x_3\tp x_4\tp x_5\tp x_6$ where by
symmetry we may assume $s_1=x_k$ for $k\in\{2,3\}$, and where $a=x_{k-1}$ and
$b=x_{k+1}$.  Observe that the path $x_{k+1}\tp x_{k+2}\tp\ldots\tp x_6$
contains none of the vertices $s_1,\ldots,s_t$. It therefore lies completely in
some connected component $K''$ of $G-S$.  So $K''\neq K$ or $K''\neq K'$. By
symmetry, we may assume that $K''\neq K$.  Recall that $y_a,y_b\in S$ and that
each vertex of $S$ has a neighbor in $K$.  Thus there exists a path  from $y_a$
to $y_b$ in $G[K\cup \{y_a,y_b\}]$.  Let $P'$ be a shortest such path.  Since
$S_1$ is a stable set, the path $P'$ has at least 3 vertices.  Also, $P'$ is an
induced path, since it is a shortest path.  Observe that the internal vertices
of $P'$ are in $K$ and so they are anticomplete to $\{x_{k+1},\ldots,x_6\}$.
Thus, since $k\in\{2,3\}$, this shows that $P'\tp x_{k+1}\tp x_{k+2}\tp\ldots\tp
x_6$ is an induced path of length $\geq 6$, which therefore contains an induced
$P_6$.

This completes the proof.
\end{proof}\vspace{-2ex}

\subsection{Proof of Lemma \ref{lem:cleaning}}

Let $n=|V(G)|$. First we test if $G$ contains a $K_5$ or a 9-antihole. If so, we
stop and report that $G$ is not 4-colorable. Note that this is correct, since
neither $K_5$ nor 9-antihole are 4-colorable.  Thus we may assume that $G$
contains no $K_5$ and no 9-antihole.

Next we test if $G$ is clean. Namely, we check if every 7-antihole of $G$ is
clean. Each time we process an antihole $C$, we first check the conditions
\ref{enum:c1} and \ref{enum:c2} in time $O(n^2)$.  If one of these conditions
fails, we stop and report that $G$ is not 4-colorable.  Indeed, if \ref{enum:c1}
fails, then there is a vertex complete to $C$ which forces $C$ to be colored
using at most 3 colors in any 4-coloring of $G$, but this is impossible.  If
\ref{enum:c2} fails, then there are vertices $x_1,x_2$ each with 6 neighbors in
$C$, where the neighborhoods of $x_1,x_2$ in $C$ are different. Namely,
$N(x_1)\cap C=C\setminus\{v_i\}$ and $N(x_2)\cap C=C\setminus\{v_j\}$ for some
$i\neq j$. If there is a 4-coloring of $G$, then necessarily $x_1$ and $v_i$
have the same color $c_1$, and no other vertex of $C$ is colored with color
$c_1$.  Similarly, $x_2$ and $v_j$ have the same color $c_2$ where no other
vertex of $C$ is colored with $c_2$. Since $i\neq j$, we deduce that $c_1\neq
c_2$ and so $C\setminus\{v_i,v_j\}$ is colored using 2 colors, but this is
impossible, since it contains a triangle.

If the conditions \ref{enum:c1}, \ref{enum:c2} hold, we continue by computing in
time $O(n^2)$ the connected components of $G-(C\cup N(C))$ . This produces
$O(n)$ components $K$ to check. For each such $K$, we test conditions
\ref{enum:c3.1} and \ref{enum:c3.2} in time $O(n^2)$.  Altogether, this takes
polynomial time.  

Having done this for all 7-antiholes, if we determine that $G$ is clean, we stop
and report this.  If not, we find a 7-antihole $C=\{v_0,v_1,\ldots,v_6\}$ and a
connected component $K$ of $G-(C\cup N(C))$ such that $N(K)$ is not a clique and
no vertex in $V(G)\setminus K$ is complete to $K$.  Let $S=N(K)$.  Note that
$S\subseteq N(C)$.

In what follows, we shall use this to produce a smaller graph $G'$ with the
required properties.  Since we have not rejected $G$ earlier, we shall assume in
the subsequent text that the conditions \ref{enum:c1} and \ref{enum:c2} hold.

\begin{claim}\label{clm:x1}
Every $u\in S$ has at least 5 consecutive neighbors in $C$.
\end{claim}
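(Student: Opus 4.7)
The plan is to apply parts (3.1) and (3.3) of the preceding structural lemma, after locating an edge inside $K$ on which $u$ is mixed. First, I observe that $K\subseteq Y$: indeed $K$ is a connected component of $G-(C\cup N(C))$, so no vertex of $K$ has a neighbor in $C$, i.e., $K\cap X=\emptyset$. Also $u\in S\subseteq N(C)=X$, and by definition of $S=N(K)$, the vertex $u$ has some neighbor $y_1\in K$. The hypothesis that no vertex in $V(G)\setminus K$ is complete to $K$ forces $u$ to have a non-neighbor $y_0\in K$ as well.

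Next I rule out that $u$ is a leaf. If $u\in Z_i$ for some $i$, pick any $\ell\neq i$; then $u\in X\setminus Z_\ell$ has the neighbor $y_1\in K\subseteq Y\subseteq Y\cup Z_\ell$, so part (3.1) yields the contradiction that $u$ is not a leaf. Hence $u\notin Z_\ell$ for every~$\ell$, and any choice of $\ell$ will be legal for applications of parts of item (3) to $u$.

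Finally I produce the mixed edge by a short connectivity argument. Since $K$ is connected, there is a path in $G[K]$ from $y_0$ to $y_1$, and along it some consecutive pair of vertices $y',y$ with $y'y\in E(G)$, $y'\notin N(u)$, and $y\in N(u)$. Then $u$ is mixed on the edge $yy'$, both of whose endpoints lie in $Y$. Choosing, say, $\ell=0$, we have $u\in X\setminus Z_\ell$, $y\in Y\cup Z_\ell$, and $y'\in Y\cup Z_\ell$, so part (3.3) applies and directly yields that $u$ has at least 5 consecutive neighbors in $C$, as required.

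The only real obstacle is spotting that the hypothesis ``no vertex in $V(G)\setminus K$ is complete to $K$'' was included precisely to guarantee a non-neighbor of $u$ inside $K$; combined with connectedness of $K$ this produces the mixed edge needed to invoke (3.3), after which the conclusion is immediate.
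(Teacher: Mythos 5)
Your proof is correct and follows essentially the same route as the paper's: use that $u\in S=N(K)$ has a neighbor in $K$, that the hypothesis ``no vertex outside $K$ is complete to $K$'' gives $u$ a non-neighbor in $K$, then use connectedness of $K$ to produce a mixed edge and apply part (3.3). The only difference is that you spend an extra paragraph ruling out that $u$ is a leaf via (3.1), which is unnecessary since $K\subseteq Y$ and $\ell$ in (3.3) is free to be chosen avoiding any single index; the paper's one-line proof skips this.
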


\begin{proofclaim}
Since $u$ has a neighbor in $K$ but no vertex is complete to $K$, we deduce that
$u$ is mixed on an edge of $K$.  Since $S\subseteq N(C)$, it follows by
\ref{clm:y1-2} that $x$ has at least 5 consecutive neighbors in $C$.
\end{proofclaim}

\begin{claim}\label{clm:x2}
Let $u_1,u_2\in S$. Then $u_1u_2\not\in E(G)$ if and only if $N(u_1)\cap
C\subseteq N(u_2)\cap C$ or $N(u_2)\cap C\subseteq N(u_1)\cap C$.
\end{claim}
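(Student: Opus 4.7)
My plan is to use three ingredients already at hand: by \ref{clm:x1} combined with \ref{enum:c1}, every $u\in S$ has exactly $5$ or $6$ consecutive neighbors in $C$; by \ref{enum:c2}, two vertices with $6$ neighbors in $C$ must have the same $C$-neighborhood; and $G$ is $K_5$-free and $(P_6,C_5)$-free. Throughout, write $N_i=N(u_i)\cap C$.

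The easy ``$\Leftarrow$'' direction uses only the $K_5$-free hypothesis. Assuming by symmetry that $N_1\subseteq N_2$, the set $N_1$ is an arc of at least five consecutive vertices of the $7$-antihole, and thus contains a triangle of $C$ (three alternating vertices of the arc are pairwise at cyclic distance $2$ or $4$, hence adjacent in $G$). Both $u_1$ and $u_2$ are complete to this triangle, so $u_1u_2\in E(G)$ would produce a $K_5$, which is forbidden. Hence $u_1u_2\notin E(G)$.

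For the ``$\Rightarrow$'' direction I would assume $u_1u_2\notin E(G)$ and, for a contradiction, that neither $N_1\subseteq N_2$ nor $N_2\subseteq N_1$. Condition \ref{enum:c2} rules out $|N_1|=|N_2|=6$, so at least one $N_i$ has size exactly $5$. The first step, which I expect to be the main obstacle, is to exhibit $v_i\in N_1\setminus N_2$ and $v_j\in N_2\setminus N_1$ with $v_iv_j\in E(G)$. This reduces to a finite case analysis on the relative positions of the two consecutive arcs $N_1,N_2$ in $C$: each of $N_1\setminus N_2$ and $N_2\setminus N_1$ is either a single vertex or a pair of consecutive vertices in $C$, whereas the only non-neighbors in $C$ of a fixed $v_i$ are $v_{i-1}$ and $v_{i+1}$, which are \emph{not} consecutive in $C$. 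This mismatch forces at least one cross pair $v_iv_j$ to be an edge.

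The second step uses the connectivity of $K$. Since $u_1,u_2\in N(K)$ and $K$ is connected, there is a shortest induced path $P=u_1\tp k_1\tp\ldots\tp k_t\tp u_2$ with $k_1,\ldots,k_t\in K$ and $t\ge 1$ (because $u_1u_2\notin E(G)$). If $t=1$, then $u_1\tp v_i\tp v_j\tp u_2\tp k_1\tp u_1$ is an induced $C_5$; if $t\ge 2$, then $k_1\tp u_1\tp v_i\tp v_j\tp u_2\tp k_t$ is an induced $P_6$. In both cases the required non-edges are automatic: the $k_\ell$'s are anticomplete to $C$ (since $K\cap N(C)=\emptyset$), $u_1v_j$ and $u_2v_i$ are non-edges by the choice of $v_i,v_j$, $u_1u_2$ by hypothesis, and the non-edges among $u_1,u_2,k_1,\ldots,k_t$ follow from $P$ being a shortest induced path. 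Either outcome contradicts $(P_6,C_5)$-freeness, completing the argument.
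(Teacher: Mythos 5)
Your overall strategy matches the paper's closely: establish a \emph{cross edge} between $N(u_1)\cap C$ and $N(u_2)\cap C$, then build a forbidden configuration via a shortest $u_1$--$u_2$ path through $K$. The backward direction and the cross-edge step are essentially fine (though the cross-edge argument is left as ``a finite case analysis''; the paper gives a cleaner two-line argument by showing that if the chosen $v_i,v_j$ are non-adjacent, then $v_{i-1}$ could be chosen in place of $v_i$).

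There is, however, a genuine gap in the final step. You write the shortest path as $u_1\tp k_1\tp\ldots\tp k_t\tp u_2$ with $t\ge 1$, handle $t=1$ by the $C_5$ $u_1\tp v_i\tp v_j\tp u_2\tp k_1\tp u_1$, and then claim that for all $t\ge 2$ the path $k_1\tp u_1\tp v_i\tp v_j\tp u_2\tp k_t$ is an induced $P_6$ because the non-edges among $u_1,u_2,k_1,\ldots,k_t$ follow from $P$ being a shortest induced path. This fails precisely when $t=2$: there $k_t=k_2$, and $k_1k_2$ \emph{is} an edge of $P$, so the six-vertex configuration you propose is a $C_6$, which is neither an induced $P_6$ nor an induced $C_5$ and yields no contradiction. (For $t\ge 3$ the argument is fine, since the path already forces $k_1k_t\not\in E(G)$; and $t\ge 4$ is anyway impossible because $P$ itself would then contain an induced $P_6$.) To close the $t=2$ case you need a different vertex of $C$: observe that $|N(u_1)\cap N(u_2)\cap C|\ge 5+5-7=3>0$, pick any $v_\ell\in N(u_1)\cap N(u_2)\cap C$, and use the induced $C_5$ given by $v_\ell\tp u_1\tp k_1\tp k_2\tp u_2\tp v_\ell$ (the internal $k$'s are anticomplete to $C$, and the other non-edges follow from $P$ being a shortest path). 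This is exactly what the paper does for its $k=4$ case.
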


\begin{proofclaim}
By \ref{clm:x1}, the neighbors of each of $u_1,u_2$ in $C$ are consecutive and
each has at least 5 neighbors in $C$.  Thus, by symmetry, we may assume that
$|N(u_1)\cap C|\geq |N(u_2)\cap C|\geq 5$.

This implies that if $N(u_2)\cap C\subseteq N(u_1)\cap C$, then $\{u_1,u_2\}$ is
complete to a triangle of $C$. From this we conclude that $u_1u_2\not\in E(G)$,
since $G$ contains no $K_5$. This proves the backward direction of the claim.

For the forward direction, assume that $u_1u_2\not\in E(G)$.  By \ref{enum:c1},
we deduce that $|N(u_2)\cap C|\neq 7$. If $|N(u_2)\cap C|=6$, then $N(u_2)\cap
C=N(u_1)\cap C$ by \ref{enum:c2}. Thus we may assume $|N(u_2)\cap C|=5$.

Observe that there exists $v_\ell\in N(u_1)\cap N(u_2)$.  For contradiction,
assume that there exists $v_i\in N(u_1)\setminus N(u_2)$ and $v_j\in
N(u_2)\setminus N(u_1)$. We choose $i,j$ so that $v_iv_j\in E(G)$.  Suppose that
this is not possible. Then by symmetry $j=i+1$. This implies by \ref{clm:x1}
that $u_1$ is complete to $C\setminus\{v_{i+1},v_{i+2}\}$.  Similarly, $u_2$ is
anticomplete to $\{v_i,v_{i-1}\}$, since $|N(u_2)\cap C|=5$.  But then we could
choose $v_{i-1}$ in place of $v_i$, a contradiction.

This shows that we may assume that $v_iv_j\in E(G)$.  Now recall that both $u_1$
and $u_2$ have neighbors in $K$. Let $P=x_1\tp x_2\tp\ldots\tp x_k$ be a
shortest path from $u_1$ to $u_2$ whose internal vertices are in $K$.  Note that
$k\geq 3$, since $u_1,u_2$ are non-adjacent.  Moreover, $k\leq 5$, since
otherwise $x_1\tp\ldots\tp x_6$ is an induced $P_6$ in $G$. If $k=5$, then
$v_i\tp x_1\tp\ldots\tp x_5$ is an induced $P_6$ in $G$.  If $k=4$, then $v_k\tp
x_1\tp x_2\tp x_3\tp x_4\tp v_k$ is an induced 5-cycle in $G$.  Thus $k=3$ in
which case $v_i\tp x_1\tp x_2\tp x_3\tp v_j\tp v_i$ is an induced 5-cycle in
$G$, a contradiction.

\end{proofclaim}

Now we discuss two possibilities.

\begin{claim}\label{clm:x3}
If every vertex in $S$ has exactly 5 neighbors on $C$, then $S$ is a chromatic
cutset.
\end{claim}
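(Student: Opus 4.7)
The plan is to exhibit an explicit partition of $S$ into pairwise complete 1-chromatic sets, which directly realizes $S$ as a chromatic cutset. For each $i\in\{0,1,\ldots,6\}$, let $S_i$ consist of those $u\in S$ with $N(u)\cap C = C\setminus\{v_i,v_{i+1}\}$. By the hypothesis of the claim and by \ref{clm:x1}, each $u\in S$ has exactly 5 consecutive neighbors on $C$, so the two non-neighbors of $u$ on $C$ form a unique consecutive pair $\{v_i,v_{i+1}\}$. Thus $\{S_0,S_1,\ldots,S_6\}$ partitions $S$ (some parts may be empty).

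The required containment and completeness structure follows directly from \ref{clm:x2}. For $u_1,u_2$ in the same $S_i$ we have $N(u_1)\cap C = N(u_2)\cap C$, so the containment hypothesis holds and $u_1u_2\notin E(G)$; hence each $S_i$ is a stable set. For $u\in S_i$ and $v\in S_j$ with $i\neq j$, the sets $N(u)\cap C$ and $N(v)\cap C$ are two distinct 5-element subsets of $C$, so being equal-sized and distinct, neither contains the other, and \ref{clm:x2} gives $uv\in E(G)$. Therefore distinct $S_i,S_j$ are complete to each other.

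The main obstacle is establishing that each $S_i$ is 1-chromatic. Fix any 4-coloring $c$ of $G$ and any $u\in S_i$. The neighbors of $u$ on $C$ are $\{v_{i+2},v_{i+3},v_{i+4},v_{i+5},v_{i+6}\}$, which contains the triangle $v_{i+2}v_{i+4}v_{i+6}$; hence at least three colors appear on $N(u)\cap C$. Since $u$ is adjacent to all five of these vertices and only four colors are available, exactly three colors appear there, and $c(u)$ must equal the unique color in $\{1,2,3,4\}\setminus c(\{v_{i+2},\ldots,v_{i+6}\})$. This value depends only on the restriction of $c$ to $C$, not on the choice of $u\in S_i$, so all vertices of $S_i$ share the same color, proving 1-chromaticity.

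Finally, $S$ is a cutset: $K$ is a nonempty component of $G-(C\cup N(C))$, the set $S=N(K)$ is disjoint from $K\cup C$, and $C$ is nonempty and contained in $V(G)\setminus(K\cup S)$, so $K$ and $C$ lie in distinct components of $G-S$. Combined with the partition and its properties above, $S$ is a chromatic cutset. The subtle part is the 1-chromaticity argument, which exploits the precise chromatic rigidity of the 7-antihole; the remaining steps are direct consequences of the structural results already established in Section~\ref{sec:antihole}.
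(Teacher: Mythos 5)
Your proof is correct and takes essentially the same route as the paper: partition $S$ by the (common, size-5, consecutive) neighborhood on $C$, use \ref{clm:x2} to get the stable/complete structure, and observe that each class is forced to the unique color missing from its triangle-containing neighborhood on $C$. The only cosmetic difference is that the paper derives stability of each class from the absence of $K_5$ rather than from \ref{clm:x2}, but both arguments are immediate.
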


\begin{proofclaim}
Assume that every vertex in $S$ has exactly 5 neighbors on $C$. We partition $S$
into equivalence classes $S_1,S_2,\ldots S_t$ where two vertices $u_1,u_2\in S$
are equivalent if and only if $C\setminus N(u_1)=C\setminus N(u_2)$. It follows
that each $S_i$ is 1-chromatic, since $N(S_i)\cap C$ contains a triangle, and
$G$ has no $K_5$.  Moreover, by \ref{clm:x2}, each $S_i$ is complete to every
$S_j$ where $i\neq j$. This shows that $S$ is a chromatic cutset as claimed.
\end{proofclaim}

\begin{claim}\label{clm:x4}
If there exist $u\in S$ with at least 6 neighbors on $C$, then $S$ is a
chromatic cutset
\end{claim}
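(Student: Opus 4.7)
My plan is to partition $S$ by the $C$-neighborhoods of its vertices and use Claim \ref{clm:x2} to control the adjacencies between the resulting parts. By Claim \ref{clm:x1} together with the hypothesis, the vertices of $S$ fall into the class $S_0=\{u\in S:|N(u)\cap C|=6\}$---whose members all share the same neighborhood $C\setminus\{v_i\}$ for a uniquely determined $v_i$, by \ref{enum:c1}--\ref{enum:c2}---and the classes $T_j=\{u\in S:N(u)\cap C=C\setminus\{v_j,v_{j+1}\}\}$ for $j\in\{0,\ldots,6\}$. The candidate partition is $\mathcal P=\{S_0\cup T_{i-1}\cup T_i\}\cup\{T_j:j\notin\{i-1,i\}\}$, discarding empty parts.

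Each class is independent, because any two members are complete to a common triangle of $C$ (such triangles exist inside any $5$ or $6$ consecutive vertices of the $7$-antihole), which would otherwise create a $K_5$ in $G$. Each class is also 1-chromatic: in any $4$-coloring of $G$ this triangle uses three of the four colors, forcing every member of the class to take the unique remaining color. Tracing through the forced $2{+}2{+}2{+}1$ coloring of $C$---with $v_i$ as the singleton, which is forced by any vertex of $S_0$---one checks that the class colors of $S_0$, $T_{i-1}$, and $T_i$ all coincide with the color of $v_i$; hence $S_0\cup T_{i-1}\cup T_i$ is itself 1-chromatic provided it is independent.

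For pairwise completeness between the parts of $\mathcal P$, Claim \ref{clm:x2} asserts that two vertices of $S$ are non-adjacent iff their $C$-neighborhoods nest. Two distinct $5$-consecutive subsets of the $7$-cycle are never nested, so any two distinct $T_j,T_k$ are mutually complete, and the $6$-set $C\setminus\{v_i\}$ nests with $C\setminus\{v_j,v_{j+1}\}$ precisely when $v_i\in\{v_j,v_{j+1}\}$, i.e.\ $j\in\{i-1,i\}$. Consequently $S_0$ is complete to every $T_j$ with $j\notin\{i-1,i\}$, which matches $\mathcal P$ exactly.

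The hard part is the independence of the merged part $S_0\cup T_{i-1}\cup T_i$, which reduces to showing that $T_{i-1}$ and $T_i$ cannot both be non-empty. If instead $u_1\in T_{i-1}$ and $u_2\in T_i$ existed, then Claim \ref{clm:x2} would give $u_1u_2\in E(G)$ while both vertices lie in $N(K)$. I would take a shortest induced path from $u_1$ to $u_2$ routed through $K$; closing it with the edge $u_1u_2$ gives an induced cycle whose length is $\geq 5$ as soon as the interior of the path has $\geq 2$ vertices in $K$, contradicting $(P_6,C_5)$-freeness, so the path through $K$ is very short. I would then extend this local configuration using the unique $C$-vertices that distinguish the two classes, namely $v_{i+1}\in N(u_1)\setminus N(u_2)$ and $v_{i-1}\in N(u_2)\setminus N(u_1)$, together with a carefully chosen neighbor in $K$ of some vertex of $S_0$, to extract an induced $P_6$ or $C_5$ in the style of the path-extraction arguments used in the proofs of Claims \ref{clm:x1} and \ref{clm:x2}. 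Once this is handled, $\mathcal P$ is the required chromatic cutset, completing the proof.
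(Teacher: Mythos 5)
Your proof approach is essentially the same as the paper's: both partition $S$ by $C$-neighborhoods (the paper labels the parts by which color they avoid, but this amounts to the same thing once the forced $2{+}2{+}2{+}1$ coloring of $C$ is invoked), and both then reduce the matter to showing the merged class $S_0\cup T_{i-1}\cup T_i$ is stable. You have correctly identified the crux as the possible coexistence of $T_{i-1}$ and $T_i$, and everything else in your outline (independence of single classes via a common triangle, 1-chromaticity via the forced coloring of $C$, pairwise completeness via Claim \ref{clm:x2}) matches the paper.

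However, the one step you leave as a sketch is precisely the step that cannot be handled the way you propose, and this is a genuine gap. You plan to derive an induced $P_6$ or $C_5$ from the configuration $u\in S_0$, $u_1\in T_{i-1}$, $u_2\in T_i$ (with $u_1u_2\in E(G)$ and $u$ non-adjacent to both) by routing paths through $K$. But observe that the nine vertices $\{v_1,\ldots,v_6,u,u_1,u_2\}$ (dropping $v_i=v_0$) induce a $9$-antihole: the non-edges form exactly the cycle $u\tp u_1\tp v_5\tp v_4\tp v_3\tp v_2\tp v_1\tp u_2\tp u$ (after relabeling). Since $\overline{C_9}$ is itself $(P_6,C_5)$-free, no induced $P_6$ or $C_5$ can be extracted from these vertices alone, and it is far from clear that attaching paths through $K$ produces one either --- your sketch does not exhibit such a path and does not verify induced-ness. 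The paper's proof instead appeals directly to the fact that $G$ is \emph{clean}, in particular $9$-antihole-free, to forbid this configuration outright; the $9$-antihole condition is thus not a shortcut but an essential hypothesis here, and a purely $(P_6,C_5)$-based extraction cannot replace it. To repair your proof you would need to invoke cleanness at this point rather than attempt a path-extraction argument.

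Two smaller remarks: (i) you should confirm that the ``wrap-around'' classes $T_j$ with $j\notin\{i-2,i-1,i,i+1\}$ are admissible parts (they meet all four colors of $C$, so either they are empty --- which the paper shows using Claim \ref{clm:x2} and \ref{enum:c1} --- or $G$ is not $4$-colorable, in which case $1$-chromaticity is vacuous); and (ii) when you say each class is stable because two members ``are complete to a common triangle,'' for the merged class you first need $T_{i-1}\cup T_i$ to avoid internal edges, which again is exactly the unproved step.
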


\begin{proofclaim}
Let $u$ be a vertex in $S$ with at least 6 neighbors on $C$.  By \ref{enum:c1},
we may assume that $C\setminus N(u)=\{v_0\}$.  

This implies that in every 4-coloring of $G$ the vertices of the antihole $C$
are uniquely colored (up to permuting colors).  Namely, $v_0$ uses color 1,
$v_1,v_2$ use color 2, $v_3,v_4$ color 3, and $v_5,v_6$ color 4.  Define $S_i$
to be the set of vertices $u\in S$ such that no vertex in $N(u)\cap C$ is
colored with color $i$.  It follows that every vertex of $S$ belongs to at most
one $S_i$, since by \ref{clm:x1} it has at least 5 neighbors in $C$.

Suppose that some vertex $v$ of $S$ belongs to no $S_i$. This implies that $v$
is adjacent to $v_0$ and at least one of $v_1,v_2$, one of $v_3,v_4$, and one of
$v_5,v_6$. Therefore  $\{u,v\}$ is complete to a triangle of $C$, and so
$uv\not\in E(G)$, since $G$ contains no $K_5$. From this, we deduce by
\ref{clm:x2} that $N(v)\cap C\supseteq N(u)\cap C$ because $v_0\in N(v)\setminus
N(u)$. Thus $N(v)\supseteq C$, since $C\setminus N(u)=\{v_0\}$. But now we
violate \ref{enum:c1}.  This proves that the sets $S_i$ partition $S$.

We now show that each $S_i$ is a stable set.  If $v,w\in S_2$, then
$C\setminus\{v_1,v_2\}$ are precisely all neighbors of $v$ and $w$ in $C$, since
both $v$ and $w$ have at least 5 consecutive neighbors in $C$ by \ref{clm:x1}.
We therefore conclude that $vw\not\in E(G)$ by \ref{clm:x2}.  Similarly if
$v,w\in S_3$ or $v,w\in S_4$. If $v,w\in S_1$ and $vw\in E(G)$, then it follows
by \ref{clm:x2} and by symmetry that $N(v)\cap C=\{v_1,\ldots,v_5\}$ and
$N(w)\cap C=\{v_2,\ldots,v_6\}$.  Thus $uw,uv\not\in E(G)$ by \ref{clm:x2},
since $N(u)\cap C=\{v_1,\ldots,v_6\}$.  But now
$\{v_1,v_2,v_3,v_4,v_5,v_6,v,u,w\}$ induces a 9-antihole in $G$, a
contradiction. This proves that no such $v,w$ exist and thus each $S_i$ is
stable.

Finally, recall again that each $v\in S_i$ has at least 5 consecutive neighbors
in $C$, but is not adjacent to any vertex of $C$ of color $i$, by the
construction of $S_i$. This implies that for every color $j\neq i$, the vertex
$v$ is adjacent to at least one vertex of $C$ of color $j$.  This implies that
each $S_i$ is 1-chromatic, and that vertices from distinct sets $S_i$ have
incomparable neighborhoods in $C$. Therefore by \ref{clm:x2} such vertices are
adjacent, which proves that the sets $S_1,S_2,S_3,S_4$ are pairwise complete.
This proves that $S$ is a chromatic cutset as claimed.

\end{proofclaim}

It follows from \ref{clm:x3} and \ref{clm:x4} that $S=N(K)$ is a chromatic
cutset. Observe that $S$ is a minimal separator, since every $v\in S$ has both a
neighbor in $K$ and a neighbor in $C$.  In particular, $S$ induces a complete
multipartite graph with parts $S_1,\ldots,S_t$. We can determine such a
partition in time $O(n^2)$.  Then in time $O(n^2)$, we contract each $S_i$ into
a vertex. By Lemma \ref{lem:contract}, the resulting graph $G'$ does not contain
a $P_6$ or a 5-cycle. Recall that the component $K$ fails \ref{enum:c3.1},  and
so $S$ is not a clique. Therefore $G'$ has fewer vertices than $G$.  Finally,
since the process of constructing $G'$ involves only contracting 1-chromatic
sets, it follows that $G$ is 4-colorable if and only if $G'$ is, and a
4-coloring of $G'$ can be directly extended to $G$.  Therefore we stop and
output $G'$ as the result of the algorithm. This is correct, as the graph $G'$
has the required properties of the lemma.  The total complexity is clearly
polynomial.

This completes the proof of Lemma \ref{lem:cleaning}.\hfill\qed

\section{Phase II. - Coloring}\label{sec:coloring}

In this section, we show how to test if a given clean $(P_6,C_5)$-graph with an
induced 7-antihole and no clique cutset is 4-colorable. We refer to this
procedure as {\bf Phase II algorithm}.

\subsection{Algorithm}

Let $G$ be a clean $(P_6,C_5)$-free graph with no clique cutset.  Let
$C=\{v_0,v_1,\ldots,v_6\}$ denote the vertex set of a 7-antihole of $G$. Let
$X=N(C)$ and let $Y=V(G)\setminus (C\cup X)$. Following the terminology of
Section \ref{sec:antihole}, the vertices of $X$ are big, small, and leaves. Let
$S$ denote the set of all small vertices having a neighbor in $Y\cup Z$.\medskip

\begin{compactenum}[\bf Step 1.]
\item
Initialize the set $R$ to be empty, and then for each
$i\in\{0,1,\ldots,6\}$ do the following.\smallskip
\begin{compactitem}[\textbullet]
\item If $S$ contains a small vertex complete to $\{v_i,v_{i+1},v_{i+2}\}$ and
anticomplete to $v_{i-1}$, then put one such a vertex $u$ into $R$.  If possible
choose $u$ so that $v_{i+3}\not\in N(u)$.
\end{compactitem}

\smallskip

\item Consider a 4-coloring $c$ of $G[C\cup R]$. (If none exists, output 
``$G$ is not 4-colorable''.)\\[0.5ex] Initialize $L(v)=\{c(v)\}$ for
each $v\in C\cup R$, and $L(v)=\{1,2,3,4\}$ for every other~$v$.\smallskip

\item
\uline{Propagate the colors:} apply the following rule as long as it 
changes the lists~$L$.
\begin{compactitem}[\textbullet]
\item If there exists $v$ with $L(v)=\{i\}$, remove $i$ from $L(u)$ for each
$u\in N(v)$.
\end{compactitem}
\smallskip

\item \uline{Test neighborhoods:} for every vertex $v$, initialize the set
$Q(v)=\emptyset$ and then for each color $i\in L(v)$ do the following.\smallskip
\begin{compactitem}[\textbullet]
\item Set $L'(w)\leftarrow L(w)\setminus\{i\}$ for each $w\in N(v)$.
\item Test if there exists a 3-coloring of $G[N(v)]$ respecting the lists $L'$.
\item If the coloring does not exist, add $i$ to $Q(v)$.
\end{compactitem}
\smallskip

\item \uline{Test common neighborhoods:} for each pair of distinct vertices
$(u,v)$, initialize the set $Q(u,v)=\emptyset$ and then for each color $i\in
L(u)$ and $j\in L(v)$ such that $i\neq j$,\smallskip
\begin{compactitem}[\textbullet]
\item Set $L'(w)\leftarrow L(w)\setminus\{i,j\}$ for each $w\in N(u)\cap N(v)$.
\item Test if there exists a 2-coloring of $G[N(u)\cap N(v)]$ respecting the lists $L'$.
\item If the coloring does not exist, add $(i,j)$ to $Q(u,v)$.
\end{compactitem}
\smallskip

\item Test if there exists a 4-coloring $c$ of $G'=G[C\cup (X\setminus Z)]$
satisfying the following.
\begin{compactitem}[\textbullet]
\item For all $u\in V(G')$, if $c(u)=i$, then $i\in L(u)$ and $i\not\in Q(u)$.
\item For all distinct $u,v\in V(G')$, if $c(u)=i$ and $c(v)=j$, then
$(i,j)\not\in Q(u,v)$.
\end{compactitem}
\smallskip

If the coloring exists, then output ``$G$ is 4-colorable''.
\smallskip

\item If all possible colorings chosen in Step 2 fail the test of Step 6, 
then output ``$G$ is not 4-colorable''.
\end{compactenum}
\medskip

Note that we chose to distinguish lists $L$ and $Q$ for convenience of
subsequent proof arguments.

\subsection{Nice Coloring}

For the proof of correctness of the algorithm, we need to discuss a special type
of (partial) coloring of~$G$.  We say that a partial coloring $c$ of $G$ is a
{\em nice coloring} if $c$ is a partial 4-coloring of $G$ using colors
$\{1,2,3,4\}$~where:\smallskip

\begin{compactenum}[(N1)]
\item the set of uncolored vertices is stable,
\setreflabel{(N\theenumi)}\label{enum:n1}
\item the neighborhood of every uncolored vertex $y$ admits a partition into 
sets $U$,~$W$~where\setreflabel{(N\theenumi)}\label{enum:n2}
\begin{compactitem}[\textbullet]
\item $c(u)\in\{1,2\}$ for all $u\in U$,
\item $c(w)\in\{3,4\}$ for all $w\in W$, and
\item at least one of $U$, $W$ is an independent set (or empty).
\end{compactitem}
\end{compactenum}
\smallskip

Note that the sets $U$, $W$ are uniquely fixed by $c$, and can possibly be
empty.

\begin{lemma}\label{lem:nice}
Let $G$ be a $(P_6,C_5)$-free graph.
If $G$ admits a nice coloring, then $G$ is 4-colorable.
\end{lemma}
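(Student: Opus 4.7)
The plan is to induct on the number of uncolored vertices of the nice coloring $c$. If there are none, $c$ is already a full $4$-coloring and we are done. Otherwise pick any uncolored $y$; by \ref{enum:n2}, $N(y)=U\cup W$ with colors $\{1,2\}$ on $U$ and $\{3,4\}$ on $W$, and at least one of $U,W$ is independent. By the symmetry swapping the color pairs $\{1,2\}$ and $\{3,4\}$, we may assume $U$ is independent. Let $H$ be the subgraph of $G$ induced by all vertices colored $1$ or $2$; then $H$ is bipartite, with bipartition given by color.

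The heart of the argument is the following claim: for every connected component $K$ of $H$, the vertices of $U\cap K$ all have the same color. Granting this, in each such $K$ we perform a Kempe swap of colors $1$ and $2$ if needed, so that afterwards every vertex of $U$ has color $1$; we then assign color $2$ to $y$. Since $W$ uses only colors $\{3,4\}$, the result is a proper partial $4$-coloring with one fewer uncolored vertex. It is still nice: \ref{enum:n1} survives because the uncolored set has merely shrunk, and for every remaining uncolored $y'$ the sets $U_{y'},W_{y'}$ are unchanged as vertex sets (Kempe swaps only permute colors within $\{1,2\}$), so \ref{enum:n2} continues to hold. Induction then finishes.

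To prove the claim, suppose for contradiction that some component of $H$ contains $U$-vertices of both colors, and let $p_0\tp p_1\tp\ldots\tp p_k$ be a shortest path in $H$ between two $U$-vertices of different colors, so $p_0,p_k\in U$ and $k$ is odd. Since $U$ is independent, $k\neq 1$. Furthermore no internal $p_i$ lies in $U$: otherwise one of the two subpaths $p_0\tp\ldots\tp p_i$ and $p_i\tp\ldots\tp p_k$ has odd length (since $k$ is odd), giving a shorter odd $U$-$U$ path. Since $N(y)\cap V(H)=U$, this gives $yp_i\notin E(G)$ for all $0<i<k$.

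When $k=3$, the cycle $y\tp p_0\tp p_1\tp p_2\tp p_3\tp y$ is an induced $5$-cycle: $yp_1,yp_2$ are non-edges by the previous remark, $p_0p_2$ and $p_1p_3$ by bipartiteness of $H$, and $p_0p_3$ by independence of $U$. This contradicts $C_5$-freeness. When $k\geq 5$, the path $y\tp p_0\tp p_1\tp p_2\tp p_3\tp p_4$ is an induced $P_6$: the edges $yp_i$ for $1\leq i\leq 4$ are non-edges as above, the chords $p_ip_j$ with $|i-j|$ even vanish by bipartiteness of $H$, and a chord $p_0p_3$ or $p_1p_4$ would produce a path $p_0\tp p_3\tp p_4\tp\ldots\tp p_k$ or $p_0\tp p_1\tp p_4\tp p_5\tp\ldots\tp p_k$ in $H$ of odd length $k-2$ between two $U$-vertices, contradicting minimality. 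This contradicts $P_6$-freeness. The main obstacle is precisely this claim, and in particular the careful bookkeeping that ensures each would-be chord of the candidate induced $C_5$ or $P_6$ is ruled out by the right combination of bipartiteness of $H$, independence of $U$, non-membership of $p_i$ in $N(y)$, and minimality of the chosen path.
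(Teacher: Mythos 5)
Your proof is correct and follows essentially the same strategy as the paper's: examine the subgraph $H=G[V_{12}]$ induced by vertices colored $1$ or $2$, show via a shortest-path/Kempe-chain argument that a path in $H$ between $U$-vertices of different colors would create an induced $C_5$ or $P_6$, then recolor components so $U$ becomes monochromatic and color $y$. The only cosmetic differences are that the paper phrases the induction as a minimal counterexample, and for long paths ($t\geq 6$) it observes the shortest path is itself an induced $P_6$ rather than prepending $y$ and checking chords.
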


\begin{proof}
For contradiction, suppose that the claim is false.  Namely, suppose that there
exists a nice coloring $c$ of $G$, but $G$ is not 4-colorable. Choose $c$ to
have smallest possible number of uncolored vertices.  Clearly, $c$ has at least
one uncolored vertex, otherwise $c$ is a 4-coloring of $G$ which we assume does
not exist.

Consider an uncolored vertex $y$. Since $c$ is a nice coloring, all neighbors of
$y$ are colored, and there is a partition of $N(y)$ into two sets $U$, $W$, one
of which is an independent set or empty, such that $c(u)\in\{1,2\}$ for each
$u\in U$ and $c(w)\in\{3,4\}$ for each $w\in W$.  By symmetry, assume that $U$
is an independent set~or~empty.

Let $U_1$ denote the set of vertices $u\in U$ with $c(u)=1$ and let $U_2$ denote
the set of vertices $u\in U$ with $c(u)=2$.  Recall that $c(w)\in\{3,4\}$ for
all $w\in W$.  Thus if $U_1$ or $U_2$ is empty, then we color $y$ by 1 or 2,
respectively, and the resulting coloring is again a partial 4-coloring of $G$.
Moreover, it is a nice coloring, since the properties \ref{enum:n1} and
\ref{enum:n2} are not affected by giving $y$ a color.  However, this coloring
has fewer uncolored vertices than $c$, a contradiction.

We may therefore assume that both $U_1$ and $U_2$ are non-empty.  Let $V_{12}$
denote the set of all vertices $v\in V(G)$ with $c(v)\in\{1,2\}$.  Recall that
$U$ is an independent set, and note that $U\subseteq V_{12}$.  Suppose first
that there exists path in $G[V_{12}]$ from a vertex of $U_1$ to a vertex of
$U_2$.  Let $P=x_1\tp x_2\tp\ldots\tp x_t$ be a shortest such path, where
$x_1\in U_1$ and $x_t\in U_2$.  This implies that $c(x_1)=1$ and $c(x_t)=2$.
Note that $x_2\not\in U$, since $x_1\in U$ and $U$ is an independent set of $G$.
Also, $x_2\not\in W$, since $x_2\in V_{12}$ implying $c(x_2)\in\{1,2\}$, while
every $w\in W$ has $c(w)\in\{3,4\}$.  So $x_2\not\in N(y)$. Similarly,
$x_{t-1}\not\in N(y)$.  This shows that $t\geq 3$.  Moreover, since $c$ is a
coloring, the colors 1 and 2 alternate on the path $P$, which shows that $t$ is
even.  Now note that $P$ is an induced path of $G$, since it is a shortest path
in $G[V_{12}]$.  Thus if $t=4$, then $y\tp x_1\tp x_2\tp x_3\tp x_4\tp y$ is an
induced 5-cycle in $G$, while if $t\geq 6$, then $P$ is an induced path of
length at least 6, and so $G$ contains a $P_6$.

This shows that there is no such a path.  Consequently, let $D$ denote the union
of all connected component $K$ of $G[V_{12}]$ such that $K\cap
U_1\neq\emptyset$. Note that $U_1\subseteq D$ and since $G[V_{12}]$ contains no
path from $U_1$ to $U_2$, we have $U_2\cap D=\emptyset$.  We therefore modify
$c$ by switching the colors 1 and 2 on the vertices in $D$.  Note that the
resulting partial coloring of $G$ is still a partial 4-coloring, and also a nice
coloring, since this operation does not affect the properties \ref{enum:n1} and
\ref{enum:n2}. Of course, the colors in the neighbouhoods of other uncolored
vertices might have changed, but only vertices with colors 1 or 2 changed their
colors to 2 or 1, respectively, and so the coloring remains nice.  Now all
vertices in $U$ are using color 2 and so we color $y$ by 1 to obtain another
nice coloring.  But this coloring has fewer uncolored vertices than $c$, a
contradiction.

This completes the proof.
\end{proof}

\subsection{Correctness}

We are now ready to prove that Phase II algorithm correctly decides if $G$ is
4-colorable.

\begin{lemma}\label{lem:correct}
Let $G$ be a clean $(P_6,C_5)$-free graph with a 7-antihole and no clique
cutset. Then $G$ has a 4-coloring if and only if Phase II algorithm reports
that $G$ is 4-colorable.
\end{lemma}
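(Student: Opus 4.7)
The statement is an equivalence; I would prove the two directions separately. The ``only if'' direction is essentially mechanical, while the ``if'' direction is where the real work lies.

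\emph{Only if.} Suppose $G$ admits a 4-coloring $c^\ast$. Then $c^\ast|_{C\cup R}$ is a 4-coloring of $G[C\cup R]$, and so it arises as the coloring fixed in some iteration of Step~2. Fix that iteration. Propagation in Step~3 never removes $c^\ast(v)$ from $L(v)$, because a color is removed from a list only when it is used on a neighbor whose list has collapsed to that color, which matches $c^\ast$. For every $v\in V(G')$, the restriction of $c^\ast$ to $N(v)$ is a proper 3-coloring of $G[N(v)]$ in colors other than $c^\ast(v)$ respecting the lists $L'$ of Step~4; hence $c^\ast(v)\notin Q(v)$. The analogous observation for $N(u)\cap N(v)$ yields $(c^\ast(u),c^\ast(v))\notin Q(u,v)$. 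So $c^\ast|_{V(G')}$ is admissible in Step~6 and the algorithm reports ``4-colorable''.

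\emph{If.} Suppose Step~6 finds a 4-coloring $c$ of $G'=G[C\cup(X\setminus Z)]$ respecting $L$ and $Q$. The vertices still to color are $Y\cup Z$; the plan is to extend $c$ to them. First I would color the leaves: by Claim~\ref{clm:y1-1}, a leaf $z\in Z_i$ has no neighbor in $Y$ and no neighbor in $Z_\ell$ for $\ell\neq i$, so $z$ is constrained only by $v_i$, by its neighbors in $X\setminus Z$, and by other leaves in $Z_i$. Using the local tests of Steps~4 and~5 at $z$, together with the structural results of Section~\ref{sec:antihole}, one shows that a valid color is always available for $z$ and that all leaves in $Z_i$ can be colored consistently. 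Once leaves are done, the still-uncolored vertices lie in $Y$, and I would apply Lemma~\ref{lem:nice} with $Y$ as the uncolored set.

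Condition \ref{enum:n1} of \emph{nice coloring}, that $Y$ is stable, I would deduce from the hypothesis that $G$ has no clique cutset together with the structural results on the attachment of $Y$ to $X$. The main obstacle is condition \ref{enum:n2}: for each $y\in Y$, the neighborhood $N(y)$ must split as $U\cup W$ with $c(U)\subseteq\{1,2\}$, $c(W)\subseteq\{3,4\}$, and one of $U$, $W$ independent. The key combinatorial tool here is Claim~\ref{clm:y2}, by which non-adjacent small vertices in $S$ with neighbors in $Y\cup Z$ have nested $C$-neighborhoods. Combined with the representatives chosen in Step~1 to form $R$ and the propagation in Step~3, this forces the colors of small vertices in $N(y)\cap S$ to align into two color classes whose nested structure makes one side an independent set; the constraints $Q(u,v)$ from Step~6 then rule out monochromatic edges inside the supposed independent side. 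Cleanness (conditions \ref{enum:c1}--\ref{enum:c3}, no $K_5$, no 9-antihole) eliminates the remaining pathological configurations. Finally, Lemma~\ref{lem:nice} turns the nice coloring into an honest 4-coloring of $G$.
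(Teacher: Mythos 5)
The ``only if'' direction is essentially identical to the paper's, and the broad strategy of invoking Lemma~\ref{lem:nice} via a nice coloring is also right. But the ``if'' direction has a genuine gap where you propose to apply Lemma~\ref{lem:nice} to $Y$ as the uncolored set and to ``deduce that $Y$ is stable from the no-clique-cutset hypothesis.'' The set $Y$ is not stable in general: it consists of vertices with no neighbor on the antihole $C$, and nothing in the hypotheses prevents edges (indeed large connected pieces) inside $Y$. The paper's actual argument does not claim $Y$ is stable. Instead it first \emph{extends} $c$ greedily to many connected components $K$ of $G[Y\cup Z]$ -- precisely those $K$ where the colored small vertices in $S^+\cap N(K)$ use at most two colors, via the $Q(v)$ and $Q(u,v)$ tests -- and only then proves (their Claim~\ref{clm:z7}) that the \emph{remaining} uncolored components are single vertices, so the residual uncolored set is trivially stable. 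Without that extension step your \ref{enum:n1} simply fails.

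A secondary structural mismatch: you propose to color the leaves $Z$ first and then deal with $Y$ alone, but by \ref{clm:y1-1} the natural decomposition is into connected components of $G[Y\cup Z]$, each of which lies entirely in $Y$ or entirely in some $Z_\ell$; the paper handles both kinds with one unified argument, because a leaf-component is constrained in exactly the same way as a $Y$-component (via the small vertices in $S$ that are complete to it). Moreover, the part of the argument you call ``the key combinatorial tool'' -- that Claim~\ref{clm:y2} plus the representatives $R$ force the small-vertex lists to align -- is only gestured at. The actual load-bearing work (Claims~\ref{clm:z3}--\ref{clm:z6}: that $S^+$ lists are equal when non-adjacent or sharing an edge of $C$, that every vertex of $S^+$ is dominated by a representative in $R$, that at most two distinct lists occur, etc.) is what lets one decide between the ``two-interval'' case handled by \ref{clm:z8} and the ``overlapping lists'' case handled by directly assigning color~4. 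Your sketch does not show why these dichotomies hold or why they close the argument.
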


\begin{proof}
Suppose that $G$ has a 4-coloring $\zeta$. Consider the iteration of Steps 2-6
where the coloring of $G[C\cup R]$ was fixed according to $\zeta$.  Since
$\zeta$ is a coloring of $G$, for every vertex $v$, the color $\zeta(v)$ is
never removed from the list $L(v)$ in Step 3. Similarly, $\zeta$ restricted to
$N(v)$ yields a 3-coloring of $G[N(v)]$ not using the color $\zeta(v)$. Thus
$\zeta(v)$ is never added to the list $Q(v)$ in Step 4.  Similarly, the pair
$(\zeta(u),\zeta(v))$ is never added to $Q(u,v)$ in Step 5. Thus the test in
Step 6 succeds, namely $\zeta$ restricted to $C\cup (X\setminus Z)$ provides the
required coloring.  This shows that Phase II algorithm correctly answers that
$G$ is 4-colorable.

Now, for the converse, we say that $c$ is a {\em good coloring}, if it satisfies
the conditions of Step 6.  Assume that Phase II algorithm answers that $G$ is
4-colorable. This happens during some iteration of the loop in Steps 2-6 after
the algorithm finds a good coloring of $G'=G[C\cup (X\setminus Z)]$ in Step 6.
Let $c$ be this coloring.  Note that this coloring coincides on $C\cup R$ with
the coloring chosen in Step 2.  This follows from the construction of the lists
$L(v)$ in Step 2, and since $c(v)\in L(v)$ for all $v\in V(G')$ because $c$ is a
good coloring. We show how to extend $c$ to a nice coloring of $G$. The claim
will then follow by Lemma \ref{lem:nice}.

For the rest of the proof, recall the notation: $X=N(C)$, $Y=V(G)\setminus
(C\cup X)$, $Z$ denotes the set of all leaves (with respect to $C$), $S$ the set
of small vertices having a neighbor in $Y\cup Z$, and $Z_\ell$ the set of
leaves at $v_\ell$. Also, let $B$ denote the set of all big vertices.  Note that
$X\supseteq B\cup S\cup Z$.

For brevity, we shall write $R$, $L(v)$, $Q(v)$, and $Q(u,v)$ to mean the final
values of these sets when the algorithm terminates (not their intermediate
values, unless specified otherwise). In particular, note that the lists $L(v)$
are initialized at Step 2, modified at Step 3, and never modified after Step 3.
Similarly, the sets $R$, $Q(v)$ and $Q(u,v)$ are constructed at Steps 1, 4 and
5, respectively, and never changed at any other step.

We need to establish several useful properties as follows.

\begin{claim}\label{clm:z1}
$L(v)\neq\emptyset$ for all $C\cup (X\setminus Z)$, and if
$|L(v)|=1$, then $L(v)=\{c(v)\}$.
\end{claim}

\begin{proofclaim}
Recall that after Step 3, the lists $L(v)$ do not change any more.  Since $c$ is
a good coloring of $G[C\cup(X\setminus Z)]$, we have $c(v)\in L(v)$, and thus
$L(v)\neq\emptyset$. So if $|L(v)|=1$, it follows that $L(v)=\{c(v)\}$.
\end{proofclaim}

\begin{claim}\label{clm:z2}
$L(v)=\{c(v)\}$ for each $v\in B\cup C\cup R$, and $|L(x)|\leq 2$
for each $x\in X\setminus Z$.
\end{claim}

\begin{proofclaim}
First consider $v\in C\cup R$. After initializing $L(v)$ in Step 2, we have
$L(v)=\{c(v)\}$.  Thus since $L(v)\neq\emptyset$ after Step 3 by \ref{clm:z1},
it follows that $L(v)=\{c(v)\}$ also after Step 3, as claimed.  Next, consider
$v\in B$. Note that $v$ is big and thus there exists a triangle $v_i,v_j,v_k\in
N(v)$. Since $v_i,v_j,v_k\in C$, we have $L(v_i)=\{c(v_i)\}$,
$L(v_j)=\{c(v_j)\}$, and $L(v_k)=\{c(v_k)\}$, as we just proved. Therefore, the
colors $c(v_i)$, $c(v_j)$, $c(v_k)$ were removed from $L(v)$ at Step 3.  Since
$v_i,v_j,v_k$ form a triangle and since $c$ is a coloring, these 3 colors are
pairwise distinct. Therefore $|L(v)|\leq 1$ which by \ref{clm:z1} implies that
$L(v)=\{c(v)\}$ as claimed.

Finally, consider $x\in X\setminus Z$. Since $x$ is not in $Z$, it has at least
two neighbors $v_i,v_j$ in $C$.  If $v_i,v_j$ cannot be chosen to be adjacent,
then it follows by symmetry that $j=i+1$ and $x$ is anticomplete to
$\{v_{i-1},v_{i+2}\}$. But now $x\tp v_i\tp v_{i+2}\tp v_{i-1}\tp v_{i+1}\tp x$
is an induced 5-cycle in $G$, a contradiction.  Thus we may assume that $v_iv_j$
is an edge.  Since $v_i,v_j\in C$, we have $L(v_i)=\{c(v_i)\}$ and
$L(v_j)=\{c(v_j)\}$ as we just proved.  Therefore the colors $c(v_i)$ and
$c(v_j)$ were removed from $L(x)$ at Step 3.  Since $v_i,v_j$ are adjacent and
since $c$ is a coloring, the two colors are distinct. Thus after Step 3, we have
$|L(x)|\leq 2$ as claimed.
\end{proofclaim}

Based on \ref{clm:z2}, let $S^+$ denote the set of all $v\in S$ with $|L(v)|=2$,
and let $S^-=S\setminus S^+$.

\begin{claim}\label{clm:z3}
For all $u,v\in S^+$, if $uv\not\in E(G)$ or if $\{u,v\}$ is complete to adjacent
vertices of $C$, then $L(u)=L(v)$.
\end{claim}

\begin{proofclaim}
First we prove that if $uv\not\in E(G)$, then $\{u,v\}$ is complete to adjacent
vertices of $C$.

Assume $uv\not\in E(G)$. Then by \ref{clm:y2}, we have up to symmetry  that
$N(u)\cap C\subseteq N(v)\cap C$. By \ref{clm:y1-3}, each of $u$, $v$ has
exactly 3 or exactly 4 consecutive neighbors in $C$. Thus $\{u,v\}$ is complete
to at least 3 consecutive vertices of $C$, and so it is complete to some
adjacent vertices of $C$, as claimed.

Now it is enough to show $L(u)=L(v)$ assuming that $\{u,v\}$ is complete to
$\{v_i,v_j\}$ where $v_iv_j\in E(G)$.  Since $u,v\in S^+$, we have
$|L(u)|=|L(v)|=2$.  Since $v_i,v_j\in C$, we have $L(v_i)=\{c(v_i)\}$ and
$L(v_j)=\{c(v_j)\}$ by \ref{clm:z2}.  Thus the colors $c(v_i)$ and $c(v_j)$ were
removed from both $L(u)$ and $L(v)$ at Step 3. Since $v_i$ is adjacent to $v_j$,
the two colors are different, because $c$ is a coloring. Moreover, since
$|L(u)|=|L(v)|=2$, no other colors were removed from the two lists at Step 3.
Thus $L(u)=L(v)=\{1,2,3,4\}\setminus\{c(v_i),c(v_j)\}$ as claimed.

\end{proofclaim}

\begin{claim}\label{clm:z4}
For each $u\in S^+$, there exists $u'\in R$ such that 

\begin{compactenum}[\rm(\theclaim.1)]
\item $c(u')\in L(u)$, and
\setreflabel{(\theclaim.\theenumi)}\label{clm:z4-1}
\item for all $v\in S^+$, either $L(v)=L(u)$ or $v\in N(u')$.
\setreflabel{(\theclaim.\theenumi)}\label{clm:z4-2}
\end{compactenum}
\end{claim}

\begin{proofclaim}
Since $u$ is in $S$, it is a small vertex with a neighbor in $Y\cup Z$. Thus by
\ref{clm:y1-3}, $u$ has exactly 3 or 4 consecutive neighbors in $C$. By
symmetry, we may assume that $u$ is complete to $\{v_1,v_2,v_3\}$ and
anticomplete to $\{v_0,v_5,v_6\}$.  Since $u\not\in R$, there exists $u'\in R$
such that  $u'$ is complete to $\{v_1,v_2,v_3\}$ and anticomplete~to~$v_0$.

Note that $\{u,u'\}$ is complete to $\{v_1,v_3\}$.  By \ref{clm:z2}, we have
$L(v_1)=\{c(v_1)\}$ and $L(v_3)=\{c(v_3)\}$. Thus the colors $c(v_1)$ and
$c(v_3)$ were removed from $L(u)$ at Step 3. The two colors are different, since
$v_1v_3$ is an edge and $c$ is a coloring. By the same argument, $c(u')$ is
neither $c(v_1)$ nor $c(v_3)$.  Thus since $|L(u)|=2$ because $u\in S^+$, it
follows that $L(u)=\{1,2,3,4\}\setminus \{c(v_1),c(v_3)\}$. Therefore $c(u')\in
L(u)$ as claimed.

Now for contradiction, let $v\in S^+$ such that $L(v)\neq L(u)$ and $v\not\in
N(u')$. Since $\{u,u'\}$ is complete to $\{v_1,v_2,v_3\}$, we deduce by
\ref{clm:z3} that $v$ is not complete to $\{v_1,v_3\}$.  Therefore $N(v)\cap
C\subseteq N(u')\cap C$ by \ref{clm:y2}, since $u',v\in S$ and $u'v\not\in
E(G)$.  The fact that $u',v\in S$ also implies by \ref{clm:y1-3} that each of
$u'$, $v$ has exactly 3 or 4 consecutive neighbors in $C$.  Thus since $u'$ is
complete to $\{v_1,v_2,v_3\}$, and anticomplete to $v_0$, and since $v$ is not
complete to $\{v_1,v_3\}$, and $N(v)\cap C\subseteq N(u')\cap C$, it follows
that $N(u')\cap C=\{v_1,v_2,v_3,v_4\}$ and $N(v)\cap C=\{v_2,v_3,v_4\}$.  Now
the choice of $u'$ implies that $v_4\in N(u)$. Indeed, if $v_4\not\in N(u)$,
then $u$ would have been prefered instead of $u'$ to be added to $R$ when
considering vertices of $S$ complete to $\{v_1,v_2,v_3\}$ and anticomplete to
$v_0$.  But now $\{u,v\}$ is complete to $\{v_2,v_4\}$ implying $L(u)=L(v)$ by
\ref{clm:z3}, a contradiction.

\end{proofclaim}

\begin{claim}\label{clm:z5}
For all $u,v,w\in S^+$, if $L(u)$, $L(v)$, $L(w)$ are pairwise distinct, then
\begin{compactenum}[\rm (\theclaim.1)]
\item $L(u)\cup L(v)\cup L(w)=\{1,2,3,4\}$, and
\setreflabel{(\theclaim.\theenumi)}\label{clm:z5-1}
\item $L(u)\cap L(v)\cap L(w)=\emptyset$.
\setreflabel{(\theclaim.\theenumi)}\label{clm:z5-2}
\end{compactenum}
\end{claim}

\begin{proofclaim}
Suppose that the lists $L(u)$, $L(v)$, $L(w)$ are pairwise distinct.

First assume that $L(u)\cup L(v)\cup L(w)\neq\{1,2,3,4\}$.  Since $u,v,w\in
S^+$, we have $|L(u)|=|L(v)|=|L(w)|=2$.  Thus up to permuting colors,
$L(u)=\{1,2\}$, $L(v)=\{1,3\}$, and $L(w)=\{2,3\}$.  By \ref{clm:z4}, there
exists $u'\in R$ such that $c(u')\in L(u)$ and we have $v,w\in N(u')$, because
neither $L(v)$ nor $L(w)$ is equal to $L(u)$.  Since $u'\in R$, we have by
\ref{clm:z2} that $L(u')=\{c(u')\}$.  Thus since $v,w\in N(u')$, the color
$c(u')$ was removed from the lists $L(v)$ and $L(w)$ at Step 3. However, recall
that $c(u')\in L(u)=\{1,2\}$.  Thus if $c(u')=1$, then $1\not\in L(v)$ and if
$c(u')=2$, then $2\not\in L(w)$, a contradiction.  This proves \ref{clm:z5-1}.

For \ref{clm:z5-2}, assume that $L(u)\cap L(v)\cap L(w)\neq\emptyset$.  Again we
have $|L(u)|=|L(v)|=|L(w)|=2$, and so up to permuting colors, we may assume
$L(u)=\{1,2\}$, $L(v)=\{1,3\}$, and $L(w)=\{1,4\}$.  Since $u$ is in $S$, it is
a small vertex with a neighbor in $Y\cup Z$.  Thus by \ref{clm:y1-3}, $u$ has
exactly 3 or 4 consecutive neighbors on $C$.  By symmetry, we may assume that
$u$ is complete to $\{v_1,v_2,v_3\}$ and anticomplete to $\{v_5,v_6\}$ (and
possibly adjacent to $v_0$ or $v_4$). Note that by \ref{clm:z2}, we have
$L(v_i)=\{c(v_i)\}$ for $i\in\{1,2,3\}$.  Thus the colors $c(v_i)$,
$i\in\{1,2,3\}$ were removed from $L(u)$ at Step 3.  In particular, since
$v_1v_3\in E(G)$ and since $c$ is a coloring, we have $c(v_1)\neq c(v_3)$. So
since $L(u)=\{1,2\}$, we may assume by symmetry and up to permuting colors that
$c(v_1)=c(v_2)=3$ and $c(v_3)=4$.  This implies that $v$ is anticomplete to
$\{v_1,v_2\}$, since otherwise the color $c(v_1)=3$ would have been removed from
$L(v)$ at Step 3, but $3\in L(v)$.  By the same argument, $w$ is not adjacent to
$v_3$, since $4\in L(w)$.

Now since $v\in S$, the vertex $v$ has exactly 3 or 4 consecutive neighbors in
$C$. Since $v_1,v_2\not\in N(v)$, we deduce that $v_5\in N(v)$.  Note that $v_5$
is complete to $\{v_1,v_3\}$. Thus $c(v_5)\in \{1,2\}$, since $c(v_1)=3$,
$c(v_2)=4$, and $c$ is a coloring.  Also $L(v_5)=\{c(v_5)\}$. Thus $c(v_5)$ was
removed from $L(v)$ at Step 3, since $v_5\in N(v)$. Since $L(v)=\{1,3\}$, it
therefore follows that $c(v_5)=2$.  From this and the fact that $c$ is a
coloring, we deduce that $c(v_0)=1$, because $v_0$ is complete to
$\{v_2,v_3,v_5\}$.  Thus $w\not\in N(v_0)$, since otherwise $c(v_0)=1$ would
have been removed from $L(w)$, but $1\in L(w)$.  From this it follows that $w$
is complete to $\{v_4,v_5,v_6\}$ and anticomplete to $\{v_0,v_1,v_2,v_3\}$,
because $w$, since $w\in S$, has exactly 3 or 4 consecutive neighbors in $C$,
and is adjacent neither to $v_3$ nor to $v_0$.

Therefore the colors $c(v_4)$ and $c(v_6)$ were removed from $L(w)$ at Step 3.
Since $L(w)=\{1,4\}$, this implies $c(v_4),c(v_6)\in\{2,3\}$.  However, neither
$c(v_4)$ nor $c(v_6)$ can be 3, since they are both adjacent to $v_1$, and
$c(v_1)=3$. Thus $c(v_4)=c(v_6)=2$, but then $c$ is not a coloring, a
contradiction.
\end{proofclaim}

\begin{claim}\label{clm:z6}
For all $u,v,w\in S^+$, the lists $L(u)$, $L(v)$, $L(w)$ are not pairwise
distinct.
\end{claim}

\begin{proofclaim}
Suppose that the lists $L(u)$, $L(v)$, $L(w)$ are pairwise distinct.  Then
$L(u)\cup L(v)\cup L(w)=\{1,2,3,4\}$ by \ref{clm:z5}. Since $u,v,w\in S^+$, we
have $|L(u)|=|L(v)|=|L(w)|=2$.  Thus by symmetry (up to permuting colors) we may
assume $L(u)=\{1,4\}$, $2\in L(v)$, and $3\in L(w)$.  By \ref{clm:z4}, there
exists $u'\in R$ such that $c(u')\in L(u)$ and we have $v,w\in N(u')$, since the
lists $L(u)$, $L(v)$, $L(w)$ are diferent.  Similarly, there exists $v'\in R$
such that $c(v')\in L(v)$ and $u,w\in N(v')$, and we have $w'\in R$ where
$c(w')\in L(w)$ and $u,v\in N(w')$.

Since $u',v',w'\in R$, we have by \ref{clm:z2} that $L(u')=\{c(u')\}$,
$L(v')=\{c(v')\}$, $L(w')=\{c(w')\}$.  Thus the colors $c(v')$ and $c(w')$ were
removed from $L(u)$ at Step 3.  In other words, $c(v'),c(w')\not\in
L(u)=\{1,4\}$.  We similarly deduce that $c(v')\neq 3$ and $c(w')\neq 2$,
because $3\in L(w)$, $2\in L(v)$, and  $w\in N(v')$, $v\in N(w')$.  This shows
that $c(v')=2$ and $c(w')=3$.  Therefore at Step 3, the color 2 was removed from
$L(w)$, and the color 3 was removed from $L(v)$. To summarize, we deduce that
$3\not\in L(v)$ and $2\not\in L(w)$.

Now recall that $c(u')\in L(u)=\{1,4\}$. By symmetry, we may assume $c(u')=1$.
Thus the color 1 was removed at Step 3 from the lists $L(v)$ and $L(w)$, since
$v,w\in N(u')$. This implies only one possibility, namely that $L(v)=\{2,4\}$
and $L(w)=\{3,4\}$, since $2\in L(v)$, $3\in L(w)$, and both lists have size 2
because $v,w\in S^+$. But now $4\in L(u)\cap L(v)\cap L(w)$, contradicting
\ref{clm:z5}. 

\end{proofclaim}

Now back to the proof of the converse direction of Lemma \ref{lem:correct}.  We
construct a nice coloring of $G$ by extending $c$ to some connected components
of $G[Y\cup Z]$.  This is done as follows.  Let $K$ be a connected component of
$G[Y\cup Z]$ that is not colored.  Let $A^+=S^+\cap N(K)$ and $A^-=S^-\cap
N(K)$.  Let $A=A^+\cup A^-$.  Note that $|L(a)|=1$ for all $a\in A^-$ by
\ref{clm:z2} and the definition of $S^-$.

By \ref{clm:y1-2}, $A$ is complete to $K$.  Moreover, by \ref{clm:y1-1}, either
$K$ is a connected component of $G[Y]$, or there exists
$\ell\in\{0,1,\ldots,6\}$ such that $K$ is a connected component of $G[Z_\ell]$.
It therefore follows that the neighborhood of $K$ consists of $A$, of $B\cap
N(K)$, and possibly of $v_{\ell}$ if $K\subseteq Z_\ell$.

Now, if $A^+$ is non-empty, then pick any $x\in A^+$. If $A^+=\emptyset$ and
$K\subseteq Z_\ell$, then let $x=v_\ell$. In $A^+=\emptyset$ and $K\subseteq Y$,
then recall that $C$ is a clean antihole, and $K$ is a connected component of
$G-N(C\cup N(C))$. Thus one of \ref{enum:c3.1}, \ref{enum:c3.2} holds for $K$.
If \ref{enum:c3.1} holds, then $N(K)$ is a clique cutset of $G$, but we assume
that $G$ contains no such a cutset.  Thus \ref{enum:c3.2} holds and so there
exists a vertex in $X\setminus Z$ complete to $K$.

This shows that in all cases, there exists a vertex $x\in C\cup(X\setminus Z)$
complete to $K$, and such that $x\in A^+$ if $A^+\neq\emptyset$.  Up to renaming
colors, we may assume $c(x)=1$.

First, suppose that all vertices in $A^+$ are assigned the color 1 by $c$, i.e.,
$c(v)=1$ for all $v\in A^+$.  Since $c(x)=1$ and since $c$ is a good coloring,
we have $1\not\in Q(x)$. Thus $1$ was not added to $Q(x)$ at Step 4.  In
particular, since the test at Step 4 succeeded, there exists a 3-coloring $c_K$
of $G[K]$ using colors $\{2,3,4\}$, where $c_K(v)\in L(v)$ for each $v\in K$.
This follows from the fact that $K\subseteq N(x)$.  We use this coloring to
extend $c$ to the component $K$. We claim that this does not create a
monochromatic edge. Suppose otherwise, and let $uv$ be such a edge with
$c(u)=c(v)$. Clearly, not both $u,v$ are in $K$, since $c_K$ is a coloring. Also
not both $u,v$ are outside $K$, since $c$ is a (partial) coloring outside $K$.
Thus we may assume that $v\in K$ and $u\in A\cup B\cup C$.  If $u\in A^+$, then
$c(u)=1$, but $c(v)\neq 1$ since the coloring $c_K$ does not use the color $1$.
If $u\in A^-\cup B\cup C$, then by \ref{clm:z1} and \ref{clm:z2}, we have
$L(u)=\{c(u)\}$.  Therefore $c(u)$ was removed from $L(v)$ at Step 3.  But then
$c(u)\neq c(v)$, since $c(v)=c_K(v)\in L(v)$, a contradiction.

So we may now assume that $A^+\neq\emptyset$ and (up to permuting colors) there
exists $y\in A^+$ with $c(y)=2$.  Note that now $x\in A^+$ and $x\neq y$.
Suppose that $c(v)\in\{1,2\}$ for every other $v\in A^+$.  Since $c$ is a good
coloring, we conclude that $(1,2)\not\in Q(x,y)$.  In particular, the pair
$(1,2)$ was not added to $Q(x,y)$ at Step 5, which implies that there exists a
2-coloring $c_K$ of $G[K]$ using colors $\{3,4\}$ such that $c_K(v)\in L(v)$ for
each $v\in K$.  For this, recall that $A^+$ is complete to $K$ and so
$K\subseteq N(x)\cap N(y)$. We use this coloring to extend $c$ to the component
$K$. We claim that this does not create a monochromatic edge.  Indeed, if $uv$
is an edge where $c(u)=c(v)$, then $u,v$ are both neither in $K$ nor outside
$K$, since $c$ and $c_K$ are colorings.  Thus we may assume that $v\in K$ and
$u\in A\cup B\cup C$.  If $u\in A^+$, we have $c(u)\in\{1,2\}$, but
$c(v)\not\in\{1,2\}$ because $c_K$ does not use colors $1,2$. If $u\in A^-\cup
B\cup C$, then we have $L(u)=\{c(u)\}$ by \ref{clm:z1} and \ref{clm:z2}.
Therefore $c(u)$ was removed from $L(v)$ at Step 3.  But then $c(u)\neq c(v)$,
since $c(v)=c_K(v)\in L(v)$, a contradiction.

We repeat the above procedure for all connected component $K$ of $G[Y\cup Z]$ to
which it applies.  This properly extends $c$ to all these components. Note that
$c$ remains a good coloring throughout this process, since we never change any
colors of already colored vertices.  If there are no uncolored components, then
$c$ is a 4-coloring of $G$ and we are done. Thus we may assume that there still
remain uncolored components, and for every such component $K$, the vertices in
$A^+=S^+\cap N(K)$ use at least 3 distinct colors.

\begin{claim}\label{clm:z7}
Every uncolored connected component of $G[Y\cup Z]$ consists of a single vertex.
\end{claim}

\begin{proofclaim}
Let $K$ be an uncolored component of $G[Y\cup Z]$. Since $K$ is uncolored, then
(up to permuting colors) there are vertices $x_1,x_2,x_3\in A^+$ such that
$c(x_1)=1$, $c(x_2)=2$, and $c(x_3)=3$.  This implies $1\in L(x_1)$, $2\in
L(x_2)$, $3\in L(x_3)$ because $c$ is a good coloring.  By \ref{clm:z6}, the
three lists $L(x_1)$, $L(x_2)$, $L(x_3)$ are not parwise distinct.  So we may
assume by symmetry that $L(x_1)=L(x_2)$.  Thus $L(x_1)=L(x_2)=\{1,2\}$, since
$1\in L(x_1)$, $2\in L(x_2)$, and $|L(x_1)|=|L(x_2)|=2$ because $x_1,x_2\in
S^+$.  This implies $L(x_3)\neq \{1,2\}$, since $3\in L(x_3)$. So by
\ref{clm:z3}, we deduce that $x_1,x_2\in N(x_3)$. Now since $c(x_1)=1$ and
$c(x_2)=2$, and since $c$ is a good coloring, we have $(1,2)\not\in Q(x_1,x_2)$.
In particular, the pair $(1,2)$ was not added to $Q(x_1,x_2)$ at Step 5. Thus
there exists a 2-coloring $c_K$ of $G[K\cup \{x_3\}]$. For this note that
$K\cup\{x_3\}\subseteq N(x_1)\cap N(x_2)$. Now recall that $x_3$ is complete to
$K$, since $x_3\in A$.  Thus if $K$ contains an edge, then together with $x_3$
this forms a triangle in $G[K\cup \{x_3\}]$. This is impossible, since $c_K$ is
a 2-coloring of $G[K\cup\{x_3\}]$. Thus $K$ has no edge as claimed.

\end{proofclaim}

\begin{claim}\label{clm:z8}
Suppose that for all $u,v\in S^+$ the lists $L(u)$, $L(v)$ are either
identical or disjoint. Then $c$ is a nice coloring.
\end{claim}

\begin{proofclaim}
Suppose that for all $u,v\in S^+$, either $L(u)=L(v)$ or $L(u)\cap
L(v)=\emptyset$.  This implies (up to permuting colors) that for all $u\in S^+$,
the list $L(u)$ is either $\{1,2\}$ or $\{3,4\}$.  We prove that $c$ is a nice
coloring. Suppose otherwise. Then there exists an uncolored vertex $y$ such that
one of \ref{enum:n1}, \ref{enum:n2} fails.  By \ref{clm:z7}, every neighbor of
$y$ is colored, implying that \ref{enum:n1} holds. So \ref{enum:n2} must fail.
Let $U$ denote the set of all neighbors $u$ of $y$ with $c(u)\in\{1,2\}$. Let
$W$ denote the set of all neighbors $w$ of $y$ with $c(w)\in\{3,4\}$.

Since \ref{enum:n2} fails, each of $U$ and $W$ contains an adjacent pair of
vertices.  Let $u_1,u_2\in U$ be adjacent, and let $w_1,w_2\in W$ be adjacent.
Since $c$ is a coloring, since $u_1u_2\in E(G)$, and since $c(u)\in\{1,2\}$ for
each $u\in U$, we may assume $c(u_1)=1$ and $c(u_2)=2$. Similarly, $c(w_1)=3$
and $c(w_2)=4$.  Note that since $c$ is a good coloring, we have $1\in L(u_1)$,
$2\in L(u_2)$, $3\in L(w_1)$, and $4\in L(w_2)$.  Thus if all of
$u_1,u_2,w_1,w_2$ are in $A^+$, then $L(u_1)=L(u_2)=\{1,2\}$ and
$L(w_1)=L(w_2)=\{3,4\}$. This implies by \ref{clm:z3} that $\{u_1,u_2\}$ is
complete to $\{w_1,w_2\}$. But then $\{y,u_1,u_2,w_1,w_2\}$ forms a $K_5$ in
$G$, impossible since $G$ is clean.  Thus we may assume by symmetry that
$u_1\not\in A^+$. Therefore $u_1\in A^-\cup B\cup C$. In particular
$L(u_1)=\{c(u_1)\}=\{1\}$ by \ref{clm:z2}. Thus the color 1 was removed from
$L(u_2)$ at Step 3. This implies that $L(u_2)$ is neither $\{1,2\}$ nor
$\{3,4\}$, since $2\in L(u_2)$ but $1\not\in L(u_2)$. This shows that $u_2$ is
not in $A^+$. Thus $u_2\in A^-\cup B\cup C$ and so $L(u_2)=\{c(u_2)\}=\{2\}$ by
\ref{clm:z2}. It now follows that both color 1 and color 2 were removed from
$L(y)$ at Step 3, because $u_1,u_2\in N(y)$. Now, recall that $c(w_1)=3$ and
$c(w_2)=4$ and $c$ is a good coloring. Thus $(3,4)\not\in Q(w_1,w_2)$. In
particular, the pair $(3,4)$ was not added to $Q(w_1,w_2)$ at Step 5. Thus there
exists a 2-coloring $c_K$ of $G[N(w_1)\cap N(w_2)]$ using colors $\{1,2\}$ such
that $c_K(v)\in L(v)$ for all $v\in N(w_1)\cap N(w_2)$. However, $y\in
N(w_1)\cap N(w_2)$ and neither 1 nor 2 belongs to $L(y)$, a contradiction.
\end{proofclaim}

In view of \ref{clm:z8}, we may assume that there exist vertices in $S^+$ whose
lists are neither disjoint nor equal.  By \ref{clm:z6}, this implies (up to
permuting colors) that for every $u\in S^+$, the list $L(u)$ is either $\{1,2\}$
or $\{1,3\}$.  We show that in this case we can directly extend the coloring $c$
to the remaining vertices.  

\begin{claim}
Assign color 4 to every uncolored vertex. This yields a 4-coloring of $G$.
\end{claim}

\begin{proofclaim}
Suppose otherwise. Then there exists an uncolored vertex $y$ having a neighbor
$x_4$ with $c(x_4)=4$.  Since $\{y\}$ is a connected component of $G[Y\cup Z]$
by \ref{clm:z7}, and since $y$ is uncolored (i.e., neither of the first two
coloring steps applies), it follows that there are vertices $x_1,x_2,x_3\in A^+$
with pairwise distinct colors.  Since $c$ is a good coloring, we have $c(x_i)\in
L(x_i)$ for all $i\in\{1,2,3\}$.  Observe that $4\not\in L(u)$ for all $u\in
S^+$, since we assume that $L(u)$ is either $\{1,2\}$ or $\{1,3\}$ for any such
a vertex $u$.  Therefore, up to permuting colors, we have that $c(x_1)=1$,
$c(x_2)=2$, and $c(x_3)=3$.  Thus $1\in L(x_1)$, $2\in L(x_2)$, and $3\in
L(x_3)$ which implies $L(x_2)=\{1,2\}$ and $L(x_3)=\{1,3\}$. Moreover, up to
permuting colors (2 and 3), we may assume that $L(x_1)=\{1,2\}$.  Consequently,
by \ref{clm:z3}, we deduce that $x_1,x_2\in N(x_3)$.

Now recall that $c(x_4)=4$.  This implies $4\in L(x_4)$, since $c$ is a good
coloring.  Therefore $x_4\not\in S^+$, because $L(u)$ is one of $\{1,2\}$,
$\{1,3\}$ for all $u\in S^+$. This implies that $x_4\in A^-\cup B\cup C$. In
particular, $L(x_4)=\{c(x_4)\}=\{4\}$ by \ref{clm:z2}. Thus $4\not\in L(y)$ as
the color 4 was removed from $L(y)$ at Step 3, because $x_4\in N(y)$. Now recall
that $c(x_1)=1$ and $c(x_2)=2$. Since $c$ is a good coloring, $(1,2)\not\in
Q(x_1,x_2)$. Thus there exists a 2-coloring $c_K$ of $G[N(x_1)\cap N(x_2)]$ with
colors $\{3,4\}$ such that $c_K(v)\in L(v)$ for all $v\in N(x_1)\cap N(x_2)$.
Note that both $y$ and $x_3$ are in $N(x_1)\cap N(x_2)$.  Since $4\not\in L(y)$
but $c_K(y)\in L(y)$ and $c_K(y)\in\{3,4\}$, it follows that $c_K(y)=3$.
Similarly, $c(x_3)=3$, since $c_K(x_3)\in\{1,3\}$ and also $c_K(x_3)\in\{3,4\}$.
But this is impossible, since $x_3$ and $y$ are adjacent, yet they are given the
same color by $c_K$ which is a coloring, a contradiction.
\end{proofclaim}

This completes the proof.
\end{proof}

\subsection{Implementation and Complexity Analysis}

\begin{lemma}\label{lem:complexity}
Phase II algorithm has a polynomial-time implementation.
\end{lemma}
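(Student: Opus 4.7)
The plan is to show that each of the seven steps of the Phase II algorithm runs in polynomial time, and that the outer loop in Step~2 enumerates only constantly many candidate colorings, so that their product is still polynomial.

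First I would observe that Step~1 inserts at most one vertex into $R$ for each index $i\in\{0,1,\ldots,6\}$, so $|R|\le 7$ and $|C\cup R|\le 14$. Hence the number of $4$-colorings of $G[C\cup R]$ considered in Step~2 is at most $4^{14}$, a constant, and it suffices to bound the cost of a single iteration of Steps~3--6.

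Steps~3 through~5 admit routine polynomial bounds. Step~3 is standard list propagation: each rule application strictly decreases $\sum_v|L(v)|$, so at most $O(n)$ removals occur, and each can be processed in $O(n)$ time. Step~4 asks, for each vertex $v$ and each $i\in L(v)$, whether $G[N(v)]$ admits a list $3$-coloring; since $G$ (hence every induced subgraph) is $P_6$-free, the list $3$-coloring algorithm of~\cite{3col-p6} applies in polynomial time, and this is invoked $O(n)$ times. Step~5 is analogous but easier: list $2$-coloring is equivalent to $2$-SAT, hence polynomial, and there are only $O(n^2)$ pairs and $O(1)$ color choices per pair to process.

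The main obstacle is Step~6, where we must decide whether $G'=G[C\cup(X\setminus Z)]$ admits a $4$-coloring $c$ that respects the lists $L$, avoids the forbidden colors $Q(v)$, and avoids the forbidden color pairs $Q(u,v)$. I would prove this is polynomial by establishing the structural bound that after Step~3 every vertex of $G'$ has a list of size at most~$2$. For $v\in C\cup R$ this is immediate because $L(v)$ was initialized to $\{c(v)\}$ in Step~2; for $x\in X\setminus Z$ the argument from the proof of Claim~\ref{clm:z2} applies: $x$ must have two adjacent neighbors $v_i,v_j\in C$ (otherwise $x\tp v_i\tp v_{i+2}\tp v_{i-1}\tp v_{i+1}\tp x$ is an induced $5$-cycle), and the two distinct colors $c(v_i),c(v_j)$ get removed from $L(x)$ by Step~3.

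With every domain of size at most~$2$ and only unary constraints (from $L$ and $Q(v)$) and binary constraints (edges of $G'$ together with the forbidden pairs in $Q(u,v)$), the instance translates directly into a polynomial-size $2$-SAT formula: introduce one boolean variable per vertex recording which of its at most two admissible colors is chosen, and encode each unary exclusion, each edge constraint of $G'$, and each forbidden pair in $Q(u,v)$ as an implicative clause. Linear-time $2$-SAT solving then decides Step~6, and Step~7 is trivial. Combining all bounds, the total running time is polynomial in $n$, as claimed.
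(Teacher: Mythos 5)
Your proposal is correct and follows essentially the same line of reasoning as the paper: a constant bound on the number of iterations of the outer loop because $|C\cup R|\le 14$, standard list propagation for Step~3, invocation of the list $3$-coloring algorithm of~\cite{3col-p6} for Step~4, list $2$-coloring via $2$-SAT for Step~5, and a $2$-SAT reduction for Step~6 justified by the observation that after propagation every vertex of $G'$ has a list of size at most $2$. The paper's Step~2 bound ($7\cdot 2^6$ up to color permutation) is tighter than your $4^{14}$ but both are constants, and the paper encodes Step~6 with one boolean variable per vertex--color pair rather than one variable per vertex, but these encodings are interchangeable; neither difference is substantive.

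One small point worth making explicit: you should note, as the paper does, that some lists may become empty after Step~3, and that this case is detected and correctly yields rejection before the $2$-SAT reduction is built. Your argument that domain sizes are at most $2$ does not by itself preclude a size-$0$ domain, so the reduction to $2$-SAT with ``one boolean per vertex'' should begin by testing for empty lists. This is a minor omission and does not affect the overall correctness of your approach.
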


\begin{proof}
The implementation of Step 1 is straightfoward by directly testing all vertices
in $S$ whether or not to include them into $R$. This takes $O(n^2)$ time.  Note
that the set $R$ this step creates contains no more than 7 vertices, one for
each $i$.

Next, at Step 2, we choose a 4-coloring of $G[C\cup R]$. For each such a
coloring, we repeat Steps 2-6. Since $|R|\leq 7$ and every vertex in $R$ is
complete to adjacent vertices of $C$,  there are at most $7\times 2^6$ possible
4-colorings of $G[C\cup R]$ up to permuting colors. Thus the loop in Steps 2-6
repeats at most this many times.

In Step 3, we propagate the colors until the lists stabilize. Note that in this
process we only need to consider each vertex $v$ once. Indeed, once we remove
its only color from the lists of its neighbors, we do not need to do it again.
The whole process can be done in $O(n^2)$ by keeping a queue of candidate
vertices (with list of size 1), and adding a vertex into the queue once its list
is reduced to size 1. Note that some lists may become empty. This does not
affect the rest of the algorithm. Namely, if this happens, Step 6 will fail.

In Step 4, we consider each vertex $v$ and each color $i\in L(v)$.  We construct
auxiliary lists $L'$ and test if $G[N(v)]$ is list colorable with respect to
$L'$. To do this in polynomial time, we observe that the lists $L'$ do not
contain color $i$, by construction. So, in fact, the coloring we seek is a list
3-coloring using the remaining colors. For this, we use the algorithm of
\cite{3col-p6} which determines this in polynomial time.

In Step 5, we similarly consider auxiliary lists and construct a list coloring,
now using only two distinct colors. This has a straightforward reduction to an
$O(n^2)$ size instance of 2SAT \cite{edwards}.  Solving the instance takes
linear time in its size \cite{tarjan-sat}, thus $O(n^4)$ time altogether for
this step (including the construction).

Finally, at Step 6, we test if there is a 4-coloring of $G'=G[C\cup (X\setminus
Z)]$. If some list $L(v)$ is empty, we know right away that there is no such a
coloring. Otherwise, by \ref{clm:z2}, we deduce that the lists $L(v)$ of
vertices in $G'$ all have size 1 or 2.  Thus the test in Step 6 reduces to 2SAT
in a straightfoward manner as follows.

For every vertex $x$ and every color $i$, we have a variable $x_i$. The clauses
are as listed below.\smallskip

\begin{tabular}{cl}
$(x_i)$ & for each $x\in V(G')$ such that $L(x)=\{i\}$\smallskip\\
$(x_i\vee x_j)$ & for each $x\in V(G')$ such that $L(x)=\{i,j\}$\smallskip\\
$(\neg x_i\vee \neg x_j)$ & for each $x\in V(G')$ and for all distinct
$i,j\in\{1,2,3,4\}$\smallskip\\
$(\neg x_i\vee \neg y_i)$ & for all adjacent $x,y\in V(G')$ and each
$i\in\{1,2,3,4\}$\smallskip\\
$(\neg x_i)$ & for each $x\in V(G')$ and each $i\in Q(x)$\smallskip\\
$(\neg x_i\vee \neg y_j)$ & for all distinct $x,y\in V(G')$ and each
$(i,j)\in Q(x,y)$
\end{tabular}
\medskip

The first 4 types of clauses encode the property that $c$ is a 4-coloring
respecting the lists $L$. The last 2 types encode the conditions imposed by the
sets $Q(x)$ and $Q(x,y)$. It easily follows that this instance is satisfiable if
and only if the required coloring $c$ exists. In particular, from a satisfying
assignment, we construct a coloring $c$ by setting $c(x)=i$ if $x_i$ is true.
It is easy to see that the instance has $O(n^2)$ clauses and so again $O(n^2)$
time \cite{tarjan-sat} for this step (just like Step 5).

This shows that all steps have polynomial time complexity and they are repeated
a finite number of times. Altogether the complexity is polynomial as claimed.

This concludes the proof.
\end{proof}

We are now finally ready to prove Theorem \ref{thm:main}.\medskip

\section{Proof of Theorem \ref{thm:main}}

Let $G$ be a given $(P_6,C_5)$-free input graph.  As usual, let $n=|V(G)|$.

First, we apply Phase I algorithm to $G$.  If this algorithm declares that $G$
is not 4-colorable, then we stop and report this. If the algorithm declares that
$G$ is clean, then we proceed to the next stage. Otherwise, the algorithm
outputs a smaller graph $G'$ and we replace $G$ by $G'$. Then we apply Phase I
algorithm to $G$ again. We do this as long as $G$ is not clean. Since each
iteration makes the graph smaller, after at most $n$ iterations we must either
find that $G$ is not 4-colorable, or produce a clean graph $G$.  By Lemma
\ref{lem:cleaning}, this is correct.

In the next stage, we decompose $G$ by clique cutsets using the algorithm from
\cite{tarjan}. This produces in polynomial time a list $\cal G$ of subgraphs of
$G$ such that no graph in the list has a clique cutset and such that $G$ is
4-colorable if and only if every graph in the list is 4-colorable. The list
$\cal G$ contains at most $n$ graphs.

We then process every graph $G''\in \cal G$. First, we test if $G''$ is perfect.
To do this, it suffices to check for 7-antiholes (we do not need the full
generality of \cite{berge}).  If $G''$ is perfect, then we proceed to the next
graph in the list. Otherwise, $G''$ contains a 7-antihole and we apply Phase II
algorithm to $G''$. This determines in polynomial time if $G''$ is 4-colorable
as shown by Lemma \ref{lem:correct}.  If the algorithm finds that $G''$ is not
4-colorable, then we stop and report that $G$ is not 4-colorable.  Otherwise, we
proceed to the next graph in the list.

It we succesfully process all graphs in the list $\cal G$ without ever
rejecting, then we stop and declare that $G$ is 4-colorable.  The correctness of
this procedure follows from Lemma \ref{lem:cleaning} and Lemma
\ref{lem:correct}.  Since all steps have polynomial-time complexity, the overal
complexity of the algorithm is polynomial. 

Finally, if $G''\in\cal G$ is found to be 4-colorable, then a 4-coloring of
$G''$ can be found in polynomial time either by \cite{perfect-col} if $G''$ is
perfect, or by extending the coloring $c$ from Step 6 of Phase II algorithm if
$G''$ is not perfect, as described in the proof of Lemma \ref{lem:correct}.
Combining the colorings of all $G''\in{\cal G}$ and then reversing the
reductions performed by Phase I algorithm then finally produces a 4-coloring of
$G$. All these steps have a straightforward polynomial-time complexity.

This completes the proof.

\section*{Acknowledgement}

We are grateful to Rafael Saul Chudnovsky Panner without whose cooperation this
research would not have been possible.

\end{document}